\documentclass[12pt]{article}
\usepackage{geometry}

\usepackage{subcaption} 
\usepackage[linesnumbered,ruled,vlined]{algorithm2e}
\usepackage{graphicx,xcolor}
\usepackage{amsmath,amsthm,amssymb,bm}
\usepackage{natbib}
\usepackage{enumitem}
\usepackage{multirow}
\usepackage{placeins}
\usepackage[normalem]{ulem}
\newcommand{\suit}[1]{\left( #1\right)}
\newcommand{\msuit}[1]{\left[ #1\right]}
\newcommand{\set}[1]{\left\{ #1\right\}}
\newcommand{\abs}[1]{\left| #1\right|}

\newcommand{\mI}{\mathcal{I}}

\newcommand{\mS}{\mathcal{S}}

\newcommand{\mM}{\mathcal{M}}
\newcommand{\bE}{\mathbb{E}}
\newcommand{\Var}{\textrm{Var}}

\newcommand{\bolSigma}{\boldsymbol{\Sigma}}

\newcommand{\var}{\mathrm{Var}}
\newcommand{\sfmax}{ s_{n,\max}^{(4)} }
\newcommand{\ssmax}{ s_{n,\max}^{(2)} }
\newcommand{\ssmin}{ s_{n,\min}^{(2)} }

\theoremstyle{plain}
\newtheorem{lemma}{Lemma}

\newtheorem{theorem}{Theorem}
\newtheorem{corollary}{Corrollary}

\title{High Dimensional Mean Test for Shrinking Random Variables with Applications to Backtesting}
\author{Liujun Chen \and Chen Zhou}

\begin{document}

\maketitle

\begin{abstract}
We propose a high dimensional mean test framework for shrinking random variables, where the underlying random variables shrink to zero as the sample size increases. By pooling observations across overlapping subsets of dimensions, we estimate subsets means and test whether the maximum absolute mean deviates from zero. This approach overcomes cancellations that occur in simple averaging and remains valid even when marginal asymptotic normality fails. We establish  theoretical properties of the test statistic and develop a multiplier bootstrap procedure to approximate its distribution. The method provides a flexible and powerful tool for the  validation and comparative backtesting of value-at-risk. Simulations show superior performance in high-dimensional settings, and a real-data application demonstrates its practical effectiveness in backtesting.
	\end{abstract}

{\it Keywords}: data pooling,  multiplier bootstrap, value-at-risk

\section{Introduction}
In both financial and climate risk management, the prediction accuracy of tail risks is a fundamental concern.
A key instrument for validating the reliability of tail risk forecasts is backtesting \citep{Chris2003elements, Mcneil2005quant}.
 That is, checking how well the predicted tail risk measures, such as  extreme quantiles, perform when confronted with new data. When  dealing with  large financial losses and  extreme climate events, a major challenge in backtesting is the scarcity of tail observations.

For example, in the contexts of Basel regulation for trading books \citep{BCBS016standard}, backtesting often involves assessing whether the predicted 99\% quantile of losses is accurate. Typically, using new data with a sample size $n=250$, reflecting 250 trading days in a year, accurately forecasting one-percent quantiles at each day corresponds to roughly 2.5 exceedances per year.  However, the effectiveness of such tests is limited, as the expected number of exceedances is too small. With only about two to three violations per year, the asymptotic approximations employed in standard backtesting procedures become unreliable, leading to unstable inference.

In essence, violation indicators are regarded as independent and identically distributed (i.i.d.) Bernoulli random variables if the forecasts of the conditional quantiles are accurate, with the violation probability being their common mean. Testing an extremely low violation probability with few tail observations is equivalent to conducting a mean zero test where the underlying Bernoulli random variables  shrink to zero in probability.

To overcome the limitation of insufficient data, one natural approach is to leverage data from multiple sources. In the financial contexts, one could consider  multiple assets, while in the contexts of climate risk, data from multiple gauging stations can be used. Combining data from many sources results in a high-dimensional dataset, which contains $n$ observations with dimension $p$. A good risk forecasting method should perform well across all dimensions.  After combining data, tail observations in such a high-dimensional dataset are not scarce.

Following this idea, 
the simplest approach is 
to conduct a pooling mean test using high-dimensional data where the underlying random variables shrink to zero. This testing problem differs from the typical mean zero test in high-dimensional statistics where the key building block is the asymptotic normality of the mean estimator at each dimension. Due to the shrinkage of the underlying random variables, asymptotic normality of the mean estimator at each dimension fails. By contrast, pooling data from all $p$-dimensions yields a reliable estimate for the average mean across these dimensions. Subsequently, one can  test whether such an average mean is significantly different from zero.
However, this simple approach bears an obvious shortcoming:  the null of having an average  mean zero is not equivalent to our intended null of having mean zeros at all dimensions. In other words, marginal means with different signs may cancel out each other in the average.

To  achieve the test of mean zeros at all dimensions, we propose a subsets-based approach. 
Consider $d$ subsets of the $p$ dimensions, each containing  $q$ dimensions. Those subsets can overlap. 
Firstly, we estimate the average mean parameter using pooled data for each subset, resulting in $d$ estimates. Next, we test whether the maximum absolute value among the $d$ estimates differs significantly from zero. This is achieved by establishing the asymptotic behavior of the test statistic and developing a multiplier bootstrap procedure to approximate the limit distribution.

 This subsets-based pooling approach introduces both advantages and new  challenges. We show that, with a suitable choice of the subsets, having  
  average mean zeros for all subsets is equivalent to having mean zeros at all dimensions. A methodological challenge is in the design of  the subsets, for which we provide practical guidance. Another challenge is in developing the theoretical guarantee for the test. Note that, the data across the $p$ dimensions are  not independent.  
  We provide the conditions under which such a testing method is theoretically guaranteed. 

Our contribution is threefold. First, we extend the literature on mean testing in high-dimensional settings by developing a method that remains valid even when underlying variables shrink to zero and when the asymptotic normality of the mean estimator at each marginal dimension fails. 
Over the past three decades, high-dimensional mean testing has received considerable attention \citep{bai1996effect, chen2010two, tony2014two, chang2017simulation}, and we refer interested readers to \cite{huang2022overview} for a recent review. However, most existing approaches typically rely on the assumption that the sample mean at each dimension satisfies a marginal central limit approximation \citep{chernozhukov2017central, chernozhukov2022improved}. Our method relaxes this assumption, thereby allowing for broader applicability in scenarios where marginal asymptotic normality does not hold.

Second, our approach is related to multi-task and transfer learning, where information is borrowed across related tasks or domains to improve estimation. 
Multi-task learning aims to improve generalization by leveraging shared information across related tasks \citep{Wang2018cross, Duan2023adaptive, Sui2025multi}, while transfer learning focuses on transferring knowledge from a source domain to a target domain \citep{Li2022transfer, Cai2024transfer, Li2024estimation}. 
  Our subsets-based pooling approach adopts a similar idea: by combining data across multiple dimensions within each subset, we effectively borrow strength across tasks, leading to more 
  precise estimation of the common mean of the shrinking random  variables.

Finally, our work contributes to the literature on backtesting methodologies. Various approaches have been developed to evaluate the validity of risk forecasting models \citep{Chris1998evaluate, mcneil2000estimation, christoffersen2004backtesting}, and recent studies have introduced comparative backtesting frameworks for statistically assessing competing forecasts \citep{fissler2015expected, nolde2017elicitability}. Our proposed method offers  a more powerful and flexible tool  for model validation  regarding extreme risk measures, and owing to its foundation in mean testing, can be naturally extended to comparative backtesting across different risk forecasting techniques.

The rest of the paper is organized as follows. In Section \ref{sec:method}, we consider a generalized mean test problem when the random variables shrink to zero. In Section \ref{sec:backtest}, we discuss the application of these methods to the VaR backtesting problem. We show the finite sample performance of the proposed methods via a simulation study in Section \ref{sec:simulation}. A real data application is given in Section \ref{sec:application}. All the technical proofs are gathered in the Supplementary Material.  For two sequences $a_n, b_n$,  $a_n \gtrsim b_n$ means that, $b_n/a_n =O(1)$,  
$a_n \asymp b_n$ means that, $a_n/b_n = O(1)$ and $b_n/a_n = O(1)$ as $n\to\infty$.

\section{Methodology}\label{sec:method}

For each sample size $n$, consider independent random vectors $(X^{(n)}_{i,1},   \dots, X^{(n)}_{i,p})$, $i=1,\dots,n$. 
For each dimension $j=1,\dots, p$, assume that the independent  random  variables $X^{(n)}_{1,j}, \dots, X^{(n)}_{n,j}$ share a common mean, i.e.,   
$$
\bE X^{(n)}_{1,j} = \bE X^{(n)}_{2,j} = \dots  = \bE X^{(n)}_{n,j} = : \mu^{(n)}_j.
$$
Our primary interest lies in testing whether all these means are zero, namely,
$$
H_0: \mu^{(n)}_j  = 0, \quad   j=1,\dots,p.
$$
In this paper, we focus on the scenario when the underling random variables shrink to zero as sample size increases, i.e.,  $X^{(n)}_{i,j} = o_P(1)$ as $n\to\infty$. In this case,  
 the normalized sums 
$$
\frac{\sum_{i=1}^n (X^{(n)}_{i,j} - \mu^{(n)}_j) }{ \sqrt{\sum_{i=1}^n\bE (X^{(n)}_{i,j}-\mu^{(n)}_j)^2   } } 
$$
may not  converge to a normal distribution. Consequently, standard mean test procedures in the literature fail. 

Without loss of generality, in the following, we drop the superscript $n$ for notational simplicity. That is, we write $X_{i,j}$ and $\mu_j$ instead of $X^{(n)}_{i,j}$ and $\mu^{(n)}_j$, respectively.

\subsection{A naive test}\label{sec:naive:test}
To address the challenge posed by shrinking  random variables, a natural idea is to pool information across all dimensions as follows. Define the aggregated random variables
$$
\begin{aligned}
	Y_i =  \sum_{j=1}^p X_{i,j},  \quad i=1,\dots,n.
\end{aligned}
$$
Note that, $Y_1,\dots, Y_n$ are independent random variables with common means $\mu_Y:=\sum_{j=1}^p \mu_j$.   The null hypothesis, $H_0: \mu_j=0, j=1,\ldots,p$, implies a different but simpler null hypothesis $H'_0: \mu_Y=0$.  We test this simpler null by considering the following test statistic 
$$
T = \frac{\sum_{i=1}^n Y_i }{\sqrt{n \widehat{\sigma}} },
$$
where 
$$
\widehat{\sigma} = \frac{1}{n}\sum_{i=1}^n \suit{Y_i - \frac{1}{n}\sum_{i=1}^n Y_i  }^2.
$$

To establish the asymptotic theory of the test statistic, we  need the following conditions.

Firstly, we  regulate the dependency  across random variables in the random vector $(X_{i,1}, \dots, X_{i,p})$, $i=1,\dots,n$,  by a $\rho$-mixing condition.  The concept of $\rho$-mixing was first developed by \cite{kolmogorov1960strong}. Define 
$$
\rho_{i}(m) =\sup_{1\le k\le p-m, Z_1\in \mathcal{F}_{i,1,k}, Z_2\in \mathcal{F}_{i,k+m,p} } \abs{\frac{\text{Cov}(Z_1,Z_2)}{\suit{\bE Z_1^2 \bE Z_2^2}^{1/2}}},
$$
where 
$\mathcal{F}_{i,a,b}$ is the $\sigma$-field generated by the random variables $X_{i,a}, X_{i,a+1} \dots,X_{i,b}$.
We assume that the following inequality holds.

\begin{enumerate}[label=(A)]
	\item \label{assum:mixing}
 For some $c_0>0$, such that, 
	$$
	\rho_{i}(m) \le \exp\suit{-2c_0m},
	$$
	uniformly for all $1\le i \le n$.
\end{enumerate}

The $\rho$-mixing condition  depends on the ordering of variables  which often has a natural meaning in the contexts of time series analysis. By contrast, beyond time series, for other data types, such as microarray data and spatial data, there
is  no natural ordering of the variables.
Following \cite{jung2009pca}, \cite{Chang2021double} and \cite{liu2024nonparametric}, we can weaken the condition \ref{assum:mixing} by assuming the existence of a 
  permutation $\pi_p$: $\set{1,\dots,p}\to \set{1,\dots,p}$, such that the permuted vector $\suit{X_{i,\pi_p(1)},\dots, X_{i,\pi_p(p)}  } $ is $\rho$-mixing.   This assumption makes the results invariant under a permutation of the variables.  Without loss of generality, we continue with the slightly stronger notion as in  Condition \ref{assum:mixing}. 
Note that, when the random vectors form an $m$-dependent sequence, they satisfy the $\rho$-mixing condition in Assumption \ref{assum:mixing} \citep{bradley2005basic}.

Next, we assume the following lower bound on the variance of $Y$.
\begin{enumerate}[label=(B)]
	\item \label{assum:var:low:bound} There exists a constant $c>0$, such that, 
$$
 \textnormal{Var}\suit{Y_i} \ge c \sum_{j=1}^p  \var(X_{i,j}),
$$
uniformly for all $1\le i\le n$.
\end{enumerate}

Condition \ref{assum:var:low:bound} requires that, the variance of    $\sum_{j=1}^ p X_{i,j}$ is at least of the same order, as the sum of the individual variances of its components. This assumption rules out strongly negative multicollinearity among the variables.
 This condition fails when some variables are nearly perfectly negatively correlated, e.g., $X_{i,2k} = -X_{i,2k-1}$, $k=1,2,\ldots$.  In such cases, the covariance terms in the expansion of $\var(Y_i)$ cancel out most of the marginal variances, leading to a much smaller overall variance of 
$Y_i$. As long as such strongly negative multicolinearity is avoided, Condition \ref{assum:var:low:bound} is satisfied.

Thirdly, we impose some uniform moment constraints on the observations, focusing on the second and fourth moments. Define 
$$
\begin{aligned}
	\sfmax = & \max_{1\le i\le n} \max_{1\le j\le p} \bE \suit{X_{i,j}-\mu_j}^4, \\
	\ssmax = & \max_{1\le i\le n} \max_{1\le j\le p} \bE \suit{X_{i,j}-\mu_j}^2, \\
	\ssmin = & \min_{1\le i\le n} \min_{1\le j\le p} \bE \suit{X_{i,j}-\mu_j}^2. 
\end{aligned}
$$
\begin{enumerate}[label=(C)]
	\item \label{assum:lyaponov} As $n\to\infty$, 
	$$
	\frac{ \sfmax }{\suit{ \ssmin }^2 } = o(np),  \quad \frac{ \ssmax }{ \ssmin  } = o(\sqrt{n}). 
	$$
\end{enumerate}

Condition~\ref{assum:lyaponov} imposes a restriction on the relative magnitudes of the second- and fourth-order moments of $X_{i,j}$. The first part ensures that the fourth moments are uniformly bounded relative to the squared second moments, while the second part limits the degree of variance heterogeneity. This technical condition  is used to validate   the Lyapunov condition for the central limit theorem  when handling the sum of $Y_i$'s.  

 Based on these assumptions, we establish the asymptotic normality of the test statistic as in the following theorem.

\begin{theorem}\label{Theorem:naive}
Assume that Conditions \ref{assum:mixing}, \ref{assum:var:low:bound}
and \ref{assum:lyaponov} hold. Then, under $H_0$,  as $n\to\infty$,
$$
T\stackrel{d}{\to} N(0,1). 
$$	
\end{theorem}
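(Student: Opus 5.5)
Under $H_0$ we have $\bE Y_i=0$, and the $Y_i$ are independent but not identically distributed. The plan is to prove a Lyapunov CLT for $\sum_{i=1}^n Y_i/B_n$, where $B_n^2=\sum_{i=1}^n \var(Y_i)$, and then to show $n\widehat{\sigma}/B_n^2\to 1$ in probability. Slutsky's lemma then gives $T\stackrel{d}{\to}N(0,1)$. Throughout, write $\sigma_i^2=\var(Y_i)$ and $\tau_{i}=\sum_{j=1}^p\var(X_{i,j})$.

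\textbf{Step 1: two-sided variance bounds.} Condition \ref{assum:var:low:bound} gives $\sigma_i^2\ge c\,\tau_i\ge c\,p\,\ssmin$. For the upper bound, apply the $\rho$-mixing covariance inequality $|\mathrm{Cov}(X_{i,j},X_{i,k})|\le \rho_i(|j-k|)\,(\var X_{i,j}\var X_{i,k})^{1/2}$ together with the summability $\sum_m e^{-2c_0m}<\infty$ from Condition \ref{assum:mixing}. This yields $\sigma_i^2\le C\,p\,\ssmax$. Hence $B_n^2\ge c\,np\,\ssmin$.

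\textbf{Step 2: fourth moments of $Y_i$.} This is the main obstacle. I need a Rosenthal-type moment inequality for sums of $\rho$-mixing variables with geometrically decaying coefficients; Shao's (1995) moment inequality for $\rho$-mixing sequences is the result I would invoke. Applied to the centred $X_{i,j}$ it gives
$$
\bE Y_i^4\le C\Bigl\{\Bigl(\sum_{j=1}^p\var(X_{i,j})\Bigr)^2+\sum_{j=1}^p\bE(X_{i,j}-\mu_j)^4\Bigr\}\le C\bigl\{p^2(\ssmax)^2+p\,\sfmax\bigr\}.
$$
If citing that inequality is not acceptable, I would prove it directly by a blocking argument: cut the index set into blocks of length about $\log p$, bound the cross-block contributions through $\rho_i(m)\le e^{-2c_0m}$, and control the within-block terms by Cauchy--Schwarz.

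\textbf{Step 3: Lyapunov condition.} Using Steps 1 and 2,
$$
\frac{\sum_{i=1}^n\bE Y_i^4}{B_n^4}\le C\,\frac{np^2(\ssmax)^2+np\,\sfmax}{n^2p^2(\ssmin)^2}=C\Bigl[\frac{1}{n}\Bigl(\frac{\ssmax}{\ssmin}\Bigr)^2+\frac{1}{np}\,\frac{\sfmax}{(\ssmin)^2}\Bigr].
$$
Both terms are $o(1)$ by the two parts of Condition \ref{assum:lyaponov}, so $\sum_{i=1}^n Y_i/B_n\stackrel{d}{\to}N(0,1)$ by the Lyapunov CLT.

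\textbf{Step 4: consistency of the variance estimator.} Write
$$
n\widehat{\sigma}=\sum_{i=1}^n Y_i^2-\frac{1}{n}\Bigl(\sum_{i=1}^n Y_i\Bigr)^2 .
$$
The second term is $O_P(B_n^2/n)=o_P(B_n^2)$ by Step 3. For the first term, $\bE\sum_i Y_i^2=B_n^2$, and by independence
$$
\var\Bigl(\sum_{i=1}^n Y_i^2\Bigr)\le\sum_{i=1}^n\bE Y_i^4=o(B_n^4),
$$
which is exactly the Lyapunov ratio already shown to vanish. Chebyshev's inequality then gives $\sum_{i=1}^n Y_i^2/B_n^2\to1$ in probability, so $n\widehat{\sigma}/B_n^2\to1$ in probability.

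\textbf{Step 5: conclusion.} Combining Steps 3 and 4 with Slutsky's lemma gives $T\stackrel{d}{\to}N(0,1)$. I am reading the $\sqrt{n\widehat{\sigma}}$ in the definition of $T$ as the standard deviation normalization, which is what this argument produces.
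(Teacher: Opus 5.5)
Your proposal is correct and follows the paper's proof essentially step for step. Both arguments verify the Lyapunov condition using the Rosenthal-type fourth-moment bound from Corollary 1.1 of \cite{shao1995maximal} and the lower variance bound from Condition (B). Both then get consistency of $\widehat{\sigma}$ by applying Chebyshev to $\sum_{i=1}^n Y_i^2$, whose variance is controlled by the same Lyapunov ratio, and showing the squared-mean term is negligible, before concluding with Slutsky. The only difference is that your Step 1 upper bound $\sigma_i^2\le C p\,\ssmax$ is never used and can be dropped.
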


Based on Theorem \ref{Theorem:naive}, we reject the null hypothesis $H_0$ whenever $|T|> z_{1-\alpha/2}$, where $z_{1-\alpha/2}$ denotes the $1-\alpha/2$ quantile of the standard normal distribution. 

This testing procedure is straightforward and easy to implement.  Nevertheless, we remark that while this test is valid in terms of size, it is expected that it does not generate a high power. Essentially, we are testing an implication of $H_0$, $H'_0$, instead of the original hypothesis. We introduce a newly designed test that tests the hypothesis $H_0$ directly in the next subsection.

\subsection{The subsets-based test}\label{sec:improved}
  We consider a subsets-based approach. The main idea is as follows. For each subset of indices, we construct a test statistic as in Section \ref{sec:naive:test}. Then the final test combines all such test statistics across different subsets.
  
 Denote $\mathcal{S} = \set{1,\dots, p}$. Let $\mathcal{S}_{1},\dots, \mathcal{S}_d$ be subsets of $\mathcal{S}$, each of cardinality  $q\le p$. These subsets  may overlap; therefore $qd$  can be greater than $p$. 
 Define 
 $$
\begin{aligned}
	Y_{i}^{(\ell)} = \sum_{j\in \mS_{\ell}} X_{i,j} , \quad i=1,\dots,n. 
\end{aligned}
$$ 
We consider the test statistic 
$$
\mathcal{M} = \max_{1\le \ell \le d}\abs{T^{(\ell)}}.
$$
where
$$
T^{(\ell)} = \frac{\sum_{i=1}^n Y_i^{(\ell)}     }{\sqrt{n \widehat{\sigma}_{\ell\ell} }}, \quad \ell=1,\dots,d,
$$
with
$$
\begin{aligned}
		\widehat{\sigma}_{\ell\ell} =& \frac{1}{n}\sum_{i=1}^n \suit{Y_i^{(\ell)} -   \frac{1}{n}\sum_{i=1}^n  Y_i^{(\ell)}   }^2.
\end{aligned}
$$

The test statistic $\mM$ is designed to test whether the subsets pooled observations $Y_i^{(\ell)}$, $i=1,\dots, n, \ell=1,\dots, d$,  have  means of zero. Formally, this corresponds to testing a new null hypothesis $H_0^*$ as follows:
$$
 H_0^*: \sum_{j \in \mathcal{S}_{\ell} } \mu_j = 0, \quad \ell=1,\dots,d.
$$
The lemma below demonstrates that, by  choosing $p$ subsets appropriately,  this is equivalent to testing whether all $\mu_j = 0$,   i.e., $H_0$.

\begin{lemma}\label{lemma:main:subsets}
Choose $q<p$ such that,  $p$ and $q$ are coprime, i.e.,
 $gcd(p,q)=1.$ Define the subsets as follows:
 \begin{align*}
 	 \mathcal{S}_{\ell} = \set{\ell,\dots, \ell+q-1} , \quad \ell = 1,\dots, p.
 \end{align*}
 Here an index $i>p$ is interpreted as $i-p$.
 Then, $H_0$ holds if and only if $H_0^*$ holds.
\end{lemma}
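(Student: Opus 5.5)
The forward direction is immediate: if $\mu_j=0$ for all $j$, then every subset sum $\sum_{j\in\mS_\ell}\mu_j$ vanishes, so $H_0^*$ holds. The work is in the converse. I would index the coordinates cyclically, $\mu_{j+p}=\mu_j$, which matches the convention in the statement that an index $i>p$ is read as $i-p$. I then write $s_\ell=\sum_{k=0}^{q-1}\mu_{\ell+k}$ for $\ell=1,\dots,p$.

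Assume $H_0^*$, that is, $s_\ell=0$ for every $\ell$. Subtracting consecutive window sums gives
\[
s_{\ell+1}-s_\ell=\mu_{\ell+q}-\mu_\ell=0,
\]
so $\mu_{\ell+q}=\mu_\ell$ for all $\ell$, with indices taken modulo $p$. Iterating this, $\mu_\ell=\mu_{\ell+kq \bmod p}$ for every integer $k$. Since $\gcd(p,q)=1$, the residue $q$ generates the additive group $\mathbb{Z}/p\mathbb{Z}$, so $\{\ell+kq \bmod p: k=0,\dots,p-1\}$ is all of $\{1,\dots,p\}$. Hence every $\mu_j$ equals a common value $\mu$. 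Plugging this into $s_1=0$ gives $q\mu=0$, and $q\ge1$ gives $\mu=0$. This is exactly $H_0$.

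An alternative is to view the map $\mu\mapsto(s_1,\dots,s_p)$ as a circulant matrix. Its eigenvalues are $\sum_{k=0}^{q-1}\omega^k$ with $\omega$ ranging over the $p$-th roots of unity. For $\omega\neq1$ this sum equals $(1-\omega^q)/(1-\omega)$, which is nonzero because $\omega^q\neq1$ when $\gcd(p,q)=1$. For $\omega=1$ it equals $q\neq0$. So the matrix is invertible, which again yields the equivalence.

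The only step needing care is the group-theoretic one: that coprimality makes the orbit of $\ell$ under shifts by $q$ modulo $p$ exhaust all indices. This is standard, since by B\'ezout there is $k$ with $kq\equiv1 \pmod p$. I would therefore rely on the difference argument as the main proof.
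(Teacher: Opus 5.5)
Your proof is correct and follows the paper's argument: you difference consecutive window sums to get $\mu_{\ell+q}=\mu_\ell$ (indices mod $p$), then use $\gcd(p,q)=1$ to show that the shift-by-$q$ orbit is a single $p$-cycle, so all $\mu_j$ coincide. You also make explicit the final step $s_1=q\mu=0\Rightarrow\mu=0$, which the paper leaves implicit, and your circulant-eigenvalue alternative is a valid independent check.
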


Lemma \ref{lemma:main:subsets} shows that, by properly choosing $p$ subsets each containing $q$ elements, the new null hypothesis $H_0^*$ is equivalent to the original null hypothesis $H_0$. In practice, to enhance the power of the test, we select $d>p$ subsets. The first $p$ subsets are defined as in Lemma \ref{lemma:main:subsets}, while the remaining subsets can be specified by the user. In particular, users may decide which dimensions they wish to pool together related to the context.

Since the subsets $\mathcal{S}_1, \dots, \mS_d$ can be overlapped, the test statistics $T^{(\ell)}, \ell=1,\dots, d$ can be strongly dependent  which leads to challenges for  deriving the asymptotic distribution of $\mathcal{M}$.
 We  therefore resort to multiplier bootstrap to construct the threshold value of the test. The detailed test procedure is as follows.
 \begin{itemize}
 	\item Draw   i.i.d. observations $\xi_i\sim N(0,1)$, $i=1,\dots,n$.
 	\item Define
 	$$
 	\mathcal{M}_B = \max_{1\le \ell \le d}\abs{T_B^{(\ell)}},
 	$$
 	where
 	$$
 	  T^{(\ell)}_B =   \frac{\sum_{i=1}^n \xi_i Y_i^{(\ell)}    }{\sqrt{n \widehat{\sigma}_{\ell\ell} }}
 	$$
 	  \item Define $c_{\alpha}$ as the $1-\alpha$ conditional quantile of $\mathcal{M}_B$ given the data $(X_{i,1},  \dots, X_{i,p})$, $i=1,\dots,n$.
 	  \item Reject $H_0$ when $\mathcal{M}>c_{\alpha}$.
 \end{itemize}

To establish the asymptotic theory of the test procedure, we  again need a few conditions that are similar to but slightly stronger than those in Theorem \ref{Theorem:naive}.
 Condition \ref{assum:mixing} remains unchanged.
In addition, we impose a  new version of Condition \ref{assum:var:low:bound},  which are compatible for indices within the subsets.

\begin{enumerate}[label=(B$^\prime$)]
	\item \label{assum:joint:var:low:bound} There exists a constant $c>0$ such that, 
	$$
		\Var(Y_i^{(\ell)}) \ge c \sum_{j \in \mathcal{S}_{\ell}} \text{Var}(X_{i,j}),
	$$ 
	uniformly for all $1\le i\le n$ and $1\le \ell \le d$.
\end{enumerate}

% \begin{remark}
% A sufficient condition for Condition \ref{assum:joint:var:low:bound} to hold is the following. Let $\boldsymbol{K}_i$ denote the covariance matrix of $(X_{i,1}, \dots, X_{i,p})$. Then Condition \ref{assum:joint:var:low:bound} is satisfied if 
%$$
%\min_{1\le i\le n}\lambda_{\min}(\boldsymbol{K}_i)\ge c \ssmax,
%$$
%for some $c>0$. Here $\lambda_{\min}(\boldsymbol{K}_i)$ denotes the minimum eigenvalue of the matrix $\boldsymbol{K}_i$.  	
% \end{remark}
 
 We also impose
a restriction on the relative magnitudes of the second- and fourth-order moments of $X_{i,j}$.  This condition is stronger than condition \ref{assum:lyaponov} owing to the  fact that we are dealing with high dimensional subsets. 

 \begin{enumerate}[label=(C$^\prime$)]
	\item \label{assum:joint:sequence:lyaponov} As $n\to\infty$,   
\begin{equation*}
\begin{aligned}
 \frac{\sfmax }{\suit{\ssmin}^2 } &=o(1)\frac{nq}{\log^{5}(pn) },  \quad   \frac{\ssmax }{\ssmin} = o(1)\min\set{\suit{\frac{n}{\log^5(pn) } }^{1/2}, \log^2 p } .
\end{aligned}
\end{equation*}
\end{enumerate}

Moreover, we impose some conditions on the choices of $q$ and $d$.  Most notably, these conditions require that $q$ and $d$ are bounded from above by some power function of $p$. In addition $q$ is bounded from below, related to $p$, $n$ and $\ssmin$.

 \begin{enumerate}[label=(D)]
	\item \label{assum:joint:sequence:highdim} As $n\to\infty$,  $ q<p$,  $\log q = O(1)\log p$, $\log d = O(1)\log p$,   and 
\begin{equation*}
\begin{aligned}
\frac{n}{\log^{5} (pn)}  \to\infty, \quad 
 	\frac{nq\ssmin }{\log^{9} (pn) } \to \infty, 
\end{aligned}
\end{equation*}
\end{enumerate}

\begin{theorem}\label{theorem:size:improved}
Assume that Conditions  \ref{assum:mixing},        \ref{assum:joint:var:low:bound}, \ref{assum:joint:sequence:lyaponov} and \ref{assum:joint:sequence:highdim} hold. Moreover, assume that, $\max_{1\le i\le n}\max_{1\le j\le p}\abs{X_{i,j}}\le C$ for some constant $C>0$.
  Then, under $H_0$, as $n\to\infty$,
$$
\Pr(\mathcal{M}>c_{\alpha})\to \alpha.
$$	
\end{theorem}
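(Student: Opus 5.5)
The natural route is the high-dimensional Gaussian approximation and multiplier bootstrap theory of \cite{chernozhukov2017central}, applied to the $d$-dimensional independent vectors $\bm{Z}_i=(Z_i^{(1)},\dots,Z_i^{(d)})$, where $Z_i^{(\ell)}=Y_i^{(\ell)}/\sigma_\ell^{1/2}$ and $\sigma_\ell = n^{-1}\sum_{i=1}^n \Var(Y_i^{(\ell)})$. Under $H_0$ these vectors have mean zero. The plan has three stages: approximate $\max_\ell |n^{-1/2}\sum_i Z_i^{(\ell)}|$ by the maximum of a Gaussian vector; show that the bootstrap reproduces that Gaussian law; and replace the population scalings $\sigma_\ell$ by $\widehat\sigma_{\ell\ell}$. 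Because $\mathcal{M}$ is a maximum of absolute values, I would apply everything to the $2d$-vector $(\bm Z_i,-\bm Z_i)$ and use the hyperrectangle results.

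\textbf{Step 1 (moment inputs).} The key point is that pooling over $q$ coordinates makes each $Z_i^{(\ell)}$ behave like a sum of $q$ weakly dependent terms, even though each $X_{i,j}$ shrinks. First, Condition \ref{assum:joint:var:low:bound} gives $\Var(Y_i^{(\ell)})\ge c\,q\,\ssmin$, so $n^{-1}\sum_i \bE (Z_i^{(\ell)})^2=1$ is nondegenerate.

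For the third and fourth moments I would use a Rosenthal-type inequality for $\rho$-mixing sequences under Condition \ref{assum:mixing} (Shao's inequality for $\rho$-mixing sums). It yields
\[
\bE\bigl(Y_i^{(\ell)}\bigr)^4\lesssim q^2(\ssmax)^2+q\,\sfmax .
\]
Dividing by $(q\ssmin)^2$ gives
\[
\bE (Z_i^{(\ell)})^4\lesssim (\ssmax/\ssmin)^2+\sfmax/(q(\ssmin)^2).
\]
Let $B_n$ denote an envelope for these moment ratios. For the max-norm envelope, boundedness $|X_{i,j}|\le C$ gives $|Z_i^{(\ell)}|\le Cq/(q\ssmin)^{1/2}$. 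That is too crude, so instead I would use the sub-exponential version of the CCK conditions. I would derive it from a Bernstein inequality for $\rho$-mixing bounded sums, so that $\|Z_i^{(\ell)}\|_{\psi_1}\lesssim \log(pn)\,(\ssmax/\ssmin)^{1/2}+ C/(q\ssmin)^{1/2}$. The rates in Conditions \ref{assum:joint:sequence:lyaponov} and \ref{assum:joint:sequence:highdim}, namely $\log^5(pn)/n\to0$, $nq\ssmin/\log^9(pn)\to\infty$, and $\log d\lesssim \log p$, are exactly what is needed to make $B_n^2\log^7(dn)/n\to0$. Checking this bookkeeping is the main obstacle, since $\ssmin\to0$ and the conditions must be shown to compensate precisely for that shrinkage.

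\textbf{Step 2 (Gaussian approximation and bootstrap).} With Step 1 established, Proposition 2.1 of \cite{chernozhukov2017central} gives
\[
\sup_t\bigl|\Pr(\max_\ell|n^{-1/2}\textstyle\sum_i Z_i^{(\ell)}|\le t)-\Pr(\max_\ell |G_\ell|\le t)\bigr|\to0,
\]
where $G\sim N(0,\Sigma)$ and $\Sigma=n^{-1}\sum_i\bE \bm Z_i\bm Z_i^\top$.

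For the bootstrap, conditionally on the data $n^{-1/2}\sum_i\xi_i \bm Z_i$ is Gaussian with covariance $\widehat\Sigma=n^{-1}\sum_i\bm Z_i\bm Z_i^\top$. By the Gaussian comparison inequality it therefore suffices to show
\[
\Delta=\max_{\ell,k}|\widehat\Sigma_{\ell k}-\Sigma_{\ell k}|=o_P(\log^{-2} d).
\]
I would bound $\Delta$ with a maximal inequality for sums of independent products, using the fourth-moment bound from Step 1 and the $\psi_1$ envelope; the $\log^5(pn)$ factors in the conditions are designed for this. Note that the bootstrap uses $Y_i^{(\ell)}$ without centering. Under $H_0$ the mean is zero, so the centering term $n^{-1}(\sum_i Y_i^{(\ell)})^2/\widehat\sigma_{\ell\ell}$ is $O_P(\log d /n)$ uniformly and is negligible.

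\textbf{Step 3 (studentization and conclusion).} Next I would show $\max_\ell|\widehat\sigma_{\ell\ell}/\sigma_\ell-1|=o_P(\log^{-1} d)$. This is the diagonal case of the bound on $\Delta$ combined with the centering estimate. Writing $T^{(\ell)}=n^{-1/2}\sum_iZ_i^{(\ell)}(\sigma_\ell/\widehat\sigma_{\ell\ell})^{1/2}$, one gets
\[
|\mathcal{M}-\max_\ell|n^{-1/2}\textstyle\sum_iZ_i^{(\ell)}||=o_P(\log^{-1/2}d),
\]
since the Gaussian maximum is $O_P(\log^{1/2}d)$. The same holds for $\mathcal{M}_B$ conditionally.

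The anti-concentration inequality for Gaussian maxima (Nazarov's inequality, applicable since $\Sigma_{\ell\ell}=1$) converts these $o_P(\log^{-1/2}d)$ perturbations into $o(1)$ changes in probability. Hence $c_\alpha$ lies within $o_P(\log^{-1/2} d)$ of the $1-\alpha$ quantile of $\max_\ell|G_\ell|$, and the standard argument (Lemma 3.2 of Chernozhukov et al.) gives $\Pr(\mathcal{M}>c_\alpha)\to\alpha$.
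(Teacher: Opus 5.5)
\textbf{Gap: the Gaussian approximation you invoke is too weak for the stated conditions.} Your architecture matches the paper's. It uses Shao's Rosenthal-type fourth-moment bound, a sub-exponential envelope from the Bernstein inequality of \cite{merlevede2009bernstein}, a Gaussian approximation over rectangles, a Gaussian comparison needing $\max_{\ell,k}|\widehat\Sigma_{\ell k}-\Sigma_{\ell k}|=o_P(\log^{-2}d)$, and anti-concentration to absorb the studentization. The step that fails is the one you flag as the main obstacle.

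You use the $\log^7$ bound of Proposition 2.1 of \cite{chernozhukov2017central} and assert that Conditions \ref{assum:joint:sequence:lyaponov} and \ref{assum:joint:sequence:highdim} give $B_n^2\log^7(dn)/n\to0$. They do not, because they are calibrated to a $\log^5$ rate. With your normalization $Z_i^{(\ell)}=Y_i^{(\ell)}/\sigma_\ell^{1/2}$, Step 1 produces an envelope of order $(\ssmax/\ssmin)^2+\sfmax/(q(\ssmin)^2)+\log^4q/(q\ssmin)$. The conditions, together with $\log q=O(\log p)$, only make each term $o(n/\log^5(pn))$. That leaves $B_n^2\log^7(dn)/n=o(\log^2(pn))$ and nothing better.

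This is not just slack in your bounds. The requirement $\bE\exp(|Z_i^{(\ell)}|/B_n)\le2$ must hold for every $i$, which forces $B_n^2\ge\max_i\bE(Z_i^{(\ell)})^2/2$. Take independent coordinates $X_{i,j}=a_i\varepsilon_{i,j}$ with Rademacher $\varepsilon_{i,j}$, $a_1=1$, $a_i=\log^{-0.2}(pn)$ for $i\ge2$, $n\asymp\log^6(pn)$, $q\asymp p^{1/2}$ and $d=2p$. This design satisfies every hypothesis of the theorem, yet $\bE(Z_1^{(\ell)})^2\asymp\log^{0.4}(pn)$, so $(B_n^2\log^7(dn)/n)^{1/6}\to\infty$.

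Separately, your envelope misplaces the logarithm. The Bernstein inequality gives $\|Z_i^{(\ell)}\|_{\psi_1}\lesssim(\ssmax/\ssmin)^{1/2}+\log^2q/(q\ssmin)^{1/2}$, with the $\log^2 q$ attached to the bounded-increment term. Your version would give $B_n^2\gtrsim\log^2(pn)\,\ssmax/\ssmin$, which the conditions cannot absorb even at a $\log^5$ rate; in the example above it already diverges.

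\textbf{Repair.} Replace the $\log^7$ result by the improved Gaussian approximation the paper uses (Theorem 1 of \cite{chernozhukov2023high}). Its error is $(B_n^2\log^5(dn)/n)^{1/4}$ under the same $\psi_1$ and fourth-moment envelope. The paper takes $B_n^2$ to be the geometric mean of $(\ssmax/\ssmin)^2+\sfmax/(q(\ssmin)^2)+\log^4q/(q\ssmin)$ and $n/\log^5(dn)$. It checks the exponential-moment condition from the moment-generating-function form of the Bernstein bound.

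Once that is done, the rest of your plan goes through:
\begin{itemize}
\item Your maximal-inequality bound for $\Delta$ is then controlled by $B_n^2\log^5(dn)/n\to0$.
\item The comparison step works with Proposition 2.1 of \cite{chernozhukov2022improved}, as in the paper.
\item Your Step 3 is a valid and slightly leaner alternative to the paper's route. You use $o_P(\log^{-1}d)$ accuracy of the variance ratios, the $O_P(\sqrt{\log d})$ size of the maximum, and Nazarov's inequality. The paper instead proves $o_P(\log^{-2}d)$ accuracy by truncating at level $b_n$, and bounds $\max_\ell|n^{-1}\sum_i(Y_i^{(\ell)}-\bE Y_i^{(\ell)})|$ directly via the Bernstein inequality.
\end{itemize}
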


Theorem \ref{theorem:size:improved} establishes that, the  proposed testing procedure  based on multiplier bootstrap is consistent.  
Next, we analyze the power of the proposed test procedure. We consider the following local  alternative hypothesis  
$$
H_1:  \max_{1\le \ell \le d}  \frac{ \abs{\sum_{j\in \mS_{\ell}}  \mu_j} }{\sqrt{  q\ssmax }  } \ge \frac{\sqrt{\lambda\log d }}{\sqrt{n}},
$$
with $\lambda\to\infty $ as $n\to\infty$.

Recall that $\ssmax$ is the maximum variance of $X_{i,j}$ across all $i$ and $j$, resulting in a potential maximum variance of $Y_i^{\ell}$ at the level $q\ssmax$. The local alternative hypothesis $H_1$ requires that the maximum normalized subset mean deviation exceeds $\sqrt{\lambda\log d}/\sqrt{n}$, a threshold tending to zero. This condition ensures that at least one of the subset means is sufficiently large to be detected by the test as the sample size increases. The threshold $\sqrt{\lambda\log d}/\sqrt{n}$ reflects the balance between the sample size $n$ and the number of subsets $d$.

\begin{theorem}\label{theorem:power:improved}
Assume that Conditions  \ref{assum:mixing},        \ref{assum:joint:var:low:bound}, \ref{assum:joint:sequence:lyaponov} and \ref{assum:joint:sequence:highdim} hold.  Moreover, assume that, $\max_{1\le i\le n}\max_{1\le j\le p}\abs{X_{i,j}}\le C$ for some constant $C>0$. Then, under $H_1$, as $n\to\infty$,
$$
\Pr(\mathcal{M}>c_{\alpha})\to 1.
$$	
\end{theorem}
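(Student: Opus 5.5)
The plan is to show two things under $H_1$: the bootstrap critical value $c_{\alpha}$ is at most of order $\sqrt{\log d}$ with probability tending to one, and the statistic $\mathcal{M}$ exceeds a multiple of $\sqrt{\lambda\log d}$ with probability tending to one. Since $\lambda\to\infty$, together these give $\Pr(\mathcal{M}>c_\alpha)\to 1$.

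\textbf{Step 1 (the critical value).} Conditionally on the data, $(T_B^{(\ell)})_{\ell\le d}$ is a centered Gaussian vector whose coordinates have variance
$$
\sum_{i}(Y_i^{(\ell)})^2\big/(n\widehat{\sigma}_{\ell\ell}) = 1+\bar Y_\ell^2/\widehat{\sigma}_{\ell\ell},
\qquad \bar Y_\ell=n^{-1}\sum_i Y_i^{(\ell)}.
$$
Under $H_1$ the means no longer vanish, so this variance is not exactly $1$. Write $\sigma_{\ell\ell}=n^{-1}\sum_i\Var(Y_i^{(\ell)})$. By Condition \ref{assum:joint:var:low:bound}, $\sigma_{\ell\ell}\ge c\,q\ssmin$.

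\begin{itemize}
\item I will first show $\max_\ell|\widehat\sigma_{\ell\ell}/\sigma_{\ell\ell}-1|=o_P(1)$. This follows from the same concentration bounds used in the proof of Theorem \ref{theorem:size:improved}, which rely on boundedness, $\rho$-mixing, and Conditions \ref{assum:joint:sequence:lyaponov}--\ref{assum:joint:sequence:highdim}. These bounds concern centered quantities and are therefore unaffected by the means.
\item The mean-dependent term $\bar Y_\ell^2/\widehat\sigma_{\ell\ell}$ may be large. To handle it, bound the conditional variances by $1+\max_\ell\bar Y_\ell^2/\widehat\sigma_{\ell\ell}=:1+R_n$.
\item The Gaussian maximal inequality together with Borell--TIS then gives $c_\alpha\le \sqrt{1+R_n}\,(\sqrt{2\log(2d)}+\sqrt{2\log(1/\alpha)})$.
\end{itemize}

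\textbf{Step 2 (lower bound on $\mathcal{M}$).} Decompose
$$
\bar Y_\ell=\mu^{(\ell)}+\bar\varepsilon_\ell, \qquad \mu^{(\ell)}=\sum_{j\in\mS_\ell}\mu_j.
$$
By the Gaussian approximation or Bernstein arguments from Theorem \ref{theorem:size:improved}, $\max_\ell|\sqrt n\,\bar\varepsilon_\ell/\sqrt{\sigma_{\ell\ell}}|=O_P(\sqrt{\log d})$. Pick $\ell^*$ attaining the maximum in $H_1$. Using $\sigma_{\ell^*\ell^*}\le C' q\ssmax$ (from $\rho$-mixing, the covariance sum is $O(q\ssmax)$),
$$
\frac{\sqrt n|\mu^{(\ell^*)}|}{\sqrt{\sigma_{\ell^*\ell^*}}}\gtrsim\sqrt{\lambda\log d}.
$$
Hence
$$
\mathcal{M}\ge|T^{(\ell^*)}|\ge \frac{\sqrt n|\bar Y_{\ell^*}|}{\sqrt{\widehat\sigma_{\ell^*\ell^*}}}\ge (1-o_P(1))\bigl(c\sqrt{\lambda\log d}-O_P(\sqrt{\log d})\bigr).
$$

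\textbf{Step 3 (main obstacle: comparing $\mathcal{M}$ with $c_\alpha$).} The obstacle is that $R_n$ can be as large as $\mathcal{M}^2/n$, so $c_\alpha$ grows with the signal. I would handle this by direct comparison. Since $c_\alpha\le\sqrt{1+\mathcal{M}^2/n}\,K\sqrt{\log d}$ with $K$ fixed, it suffices to check that this is smaller than $\mathcal{M}$. That holds whenever $\mathcal{M}\ge 2K\sqrt{\log d}$ and $\mathcal{M}^2/n\cdot K^2\log d\le \mathcal{M}^2/4$. The second condition holds because $\log d=O(\log p)=o(n)$ by Condition \ref{assum:joint:sequence:highdim}. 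Combining this with Step 2 and $\lambda\to\infty$ yields the claim.

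A further subtlety is that $\widehat\sigma_{\ell\ell}$ involves the centered sample and so is mean-free; I will verify this carefully, since it is what lets the variance-consistency step transfer from the null setting unchanged.
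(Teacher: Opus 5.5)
Your argument is correct. It shares the paper's skeleton: bound $c_\alpha$ by a Gaussian maximal inequality scaled by $\max_\ell \widehat{\bolSigma}_{\ell\ell}^{1/2}$, then show that $|T^{(\ell^*)}|$ beats this bound at the deterministic maximiser $\ell^*$ in $H_1$. Where you differ is in how you control the signal-driven inflation of the bootstrap variances.

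The paper splits $\widehat{\bolSigma}_{\ell\ell}$ into $\widetilde{\sigma}_{\ell\ell}/\widehat{\sigma}_{\ell\ell}$, a cross term, and $\delta_\ell^2/\widehat{\sigma}_{\ell\ell}$. It bounds these uniformly in $\ell$ through Lemma~\ref{lemma:convergence:var}, \eqref{s:con:var:2} and $\sigma_{\ell\ell}\gtrsim q\ssmin$, then compares the result with the deterministic signal $\sqrt{n}\,\delta/\sqrt{q\ssmax}$. Because the inflation is normalised by $q\ssmin$ but the signal by $q\ssmax$, it must invoke the heterogeneity part of Condition~\ref{assum:joint:sequence:lyaponov} to get $\ssmax\log d/(n\ssmin)\to 0$.

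Your identity $\widehat{\bolSigma}_{\ell\ell}=1+\bar Y_\ell^2/\widehat{\sigma}_{\ell\ell}=1+(T^{(\ell)})^2/n$ gives $R_n=\mathcal{M}^2/n$ exactly, not merely ``as large as''. Hence $c_\alpha\le K\sqrt{\log d}\,(1+\mathcal{M}^2/n)^{1/2}$ holds for every realisation of the data, and the comparison is self-normalised. It needs only $\log d=o(n)$ and $\mathcal{M}\ge 2K\sqrt{\log d}$ with probability tending to one.

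As a consequence, your first bullet (uniform variance consistency) is never actually used. You need only ratio consistency of $\widehat{\sigma}_{\ell^*\ell^*}$, which is mean-free as you note. The noise at $\ell^*$ is $O_P(1)$ by Chebyshev, because $\ell^*$ is deterministic and the normalised centred sum has variance exactly one.

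One caution: do not take the ``Bernstein'' option for the uniform noise bound in Step 2. The paper's Bernstein estimate \eqref{s:con:var:2} only gives a $\log^{3/2} d$ rate, which $\sqrt{\lambda\log d}$ need not dominate when $\lambda\to\infty$ slowly. What makes Step 2 work is the Gaussian-approximation version or, more simply, the single-index Chebyshev bound.

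The paper's route buys an explicit bound on $c_\alpha$ in terms of the deterministic signal $\delta$. Yours is shorter and uses fewer of the rate conditions.
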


Theorem \ref{theorem:power:improved} establishes that,  under the local alternative hypothesis $H_1$, the power of the test,  converges to 1.

\section{Application: Backtesting for VaR}\label{sec:backtest}
In this section, we demonstrate how the proposed test procedure in Section \ref{sec:method} can be applied to the backtesting of various models in value-at-risk (VaR) forecasting.

Let $U_{i,j}$ denote the negative log-return of asset $j$ at time $i$, and let $R_{i,j}$ and $R^*_{i,j}$ be two competing forecasts of the $1-\theta_{j,0}$ conditional quantile of asset $j$ at time $i$, respectively. We focus on two types of backtests commonly used in practice:
(1) validation backtests, which assess whether one VaR forecast model is correctly specified, and
(2) comparative backtests, which compare the predictive accuracy of two  forecast models.

\subsection{Validation Backtest}

The purpose of the validation backtest is to determine whether the forecast 
$R_{i,j}$ (or $R_{i,j}^*$) provides  an accurate prediction 
of the target   $1-\theta_{j,0}$ conditional quantile for  asset $j$ at time $i$. 
Define  an exceedance indicator
$$
\widetilde{X}_{i,j} = \mI\suit{ U_{i,j} > R_{i,j}}, \quad i=1,\dots,n, \ j=1,\dots, p.
$$
Under a correctly specified model, the random vectors
$$
(\widetilde{X}_{i,1}, \dots, \widetilde{X}_{i,p}), \quad i=1,\dots, n,
$$
are i.i.d., and each component follows a Bernoulli distribution with parameter $\theta_j$,  where $\theta_j$ is the probability that the loss exceeds the VaR forecast for asset $j$.
Thus, the null hypothesis  to test is 
$$
H_{0,\text{VaR}}: \quad \theta_j = \theta_{j,0}, \quad j=1,\dots,p.
$$
Here, $\theta_{j,0} = \theta_{j,0}(n) \to 0$ as $n\to\infty$, and $n\theta_{j,0} = O(1)$.

The standard backtest for VaR specified in the Basel regulation (\cite{basel2013bank}, pages 103-108) uses the test statistic $\sum_{i=1}^n \widetilde{X}_{ij}$ to test the null hypothesis $\theta_j = \theta_{j,0}$. However,  as discussed above, when $\theta_{j,0}$ is low, i.e., $\theta_{j,0} = \theta_{j,0}(n) \to 0$ and $n\theta_{j,0} = O(1)$, this test statistic is no longer asymptotically normal. In this case, we can use the subsets-based test procedure as follows.
Define
$$
X_{i,j} = \widetilde{X}_{i,j} - \theta_{j,0}.
$$
By reformulating the null hypothesis $H_{0,\text{VaR}}$ into the mean zero test problem
$$
\bE(X_{i,j}) = 0, \quad j=1,\dots,p,
$$
by contrast, we can use the test procedures detailed in Section \ref{sec:method}.

In practice, the target exceedance probability is typically set identically across all
$p$ assets, i.e., $\theta_{1,0} = \dots = \theta_{p,0} \equiv \theta_0$. 
 Under the null $H_{0,\text{VaR}}$, Condition \ref{assum:joint:sequence:lyaponov} and  \ref{assum:joint:sequence:highdim}  reduce to
\begin{equation}\label{condition:validation}
\frac{nq\theta_0}{\log^9(pn)}\to\infty,  \quad  \frac{\log^5 p}{n} \to 0.	
\end{equation}
Then, we have the following corollary. 
\begin{corollary}
	 Assume that Conditions  \ref{assum:mixing},        \ref{assum:joint:var:low:bound} hold. Moreover, assume that \eqref{condition:validation} holds.  Then, under $H_0$, we have that,  as $n\to\infty$, 
	 $$
\Pr(\mathcal{M}>c_{\alpha})\to \alpha.
$$	
\end{corollary}

Under the alternative, we have that, 
$$
\begin{aligned}
	\ssmin  = & \min_{1\le j\le p}  \theta_j(1-\theta_j)  \asymp \min_{1\le j \le p} \theta_j=:\theta_{\min}, \\
	\ssmax = &\max_{1\le j\le p}  \theta_j(1-\theta_j) \asymp \max_{1\le j \le p} \theta_j=:\theta_{\max}, \\
	\sfmax = & \max_{1\le j\le p}  \theta_j(1-\theta_j)(\theta_j^3+(1-\theta_j)^3)\asymp  \theta_{\max}. \\
\end{aligned}
$$
Then,   Condition \ref{assum:joint:sequence:lyaponov}    reduces to
\begin{equation}\label{eq:backtest:conditionC}
	 \frac{\theta_{\max} }{\suit{\theta_{\min}}^2 } =o(1)\frac{nq}{\log^{5}(pn) },  \quad   \frac{\theta_{\max} }{\theta_{\min}} = o(1)\min\set{\suit{\frac{n}{\log^5(pn) } }^{1/2}, \log^2 p } .
\end{equation}
Similarly, Condition \ref{assum:joint:sequence:highdim} specializes to
\begin{equation}\label{eq:backtest:conditionD}
\begin{aligned}
\frac{n}{\log^{5} (pn)}  \to\infty, \quad 
 	\frac{nq\theta_{\min} }{\log^{9} (pn) } \to \infty, 
\end{aligned}
\end{equation}
The local alternative $H_1$ can be written as 

$$
H_1: \max_{1\le \ell \le d} \abs{ \frac{1}{q}\sum_{j\in \mathcal{S}_{\ell}} \frac{\theta_j-\theta_{j,0} }{\theta_{\max}}   } \ge  \frac{\sqrt{ \lambda \log d}}{\sqrt{nq\theta_{\max} }}.
$$
Here, $(\theta_j-\theta_{j,0})/\theta_{\max}$ represents the percentage deviation of the exceedance probability $\theta_j$ from the target $\theta_{j,0}$. The local alternative hypothesis $H_1$ requires that the maximum average percentage deviation across subsets exceeds a threshold, ensuring that at least one subset has a sufficiently large deviation from the null to be detected by the test as the sample size increases.

\begin{corollary}

Assume that Conditions  \ref{assum:mixing},        \ref{assum:joint:var:low:bound}, \eqref{eq:backtest:conditionC} and \eqref{eq:backtest:conditionD} hold.   Then, under $H_1$, as $n\to\infty$,
$$
\Pr(\mathcal{M}>c_{\alpha})\to 1.
$$	
\end{corollary}

\subsection{Comparative Backtest}
When both VaR forecast models are not rejected by the validation backtest, each can be regarded as a statistically reliable predictor of the corresponding VaRs. A natural next step is to compare their relative forecasting performance. Comparative backtesting serves precisely this purpose by evaluating whether one forecast model systematically outperforms another in terms of quantile forecast accuracy, as follows.

Define the scoring function
$$
S(r,x,\theta) = \suit{1-\theta - \mI\suit{x > r}} G(r) + \mI\suit{x > r} G(x),
$$
where $G$ is an increasing function on $\mathbb{R}$. By \cite{Saerens2000build} and \cite{nolde2017elicitability}, this scoring rule is strictly consistent  for eliciting the $\theta-$quantile, i.e.,   in expectation it is minimized when $r$  equals the true $\theta$-quantile.   Thus, a better prediction should have a lower score, which provides a valid basis for comparative backtesting.

The null hypothesis for the comparative test is formulated as 
$$
H_{0,\text{Com}}: \quad \bE \suit{ S(R_{i,j}, U_{i,j},\theta_{j,0} ) - S(R^*_{i,j}, U_{i,j}, \theta_{j,0}) } = 0, \quad i=1,\dots,n, \ j=1,\dots,p.
$$
Define
$$
X_{i,j} = S(R_{i,j}, U_{i,j}, \theta_{j,0}) - S(R^*_{i,j}, U_{i,j},\theta_{j,0}).
$$
Then the null hypothesis can be regarded equivalently as
$$
\bE(X_{i,j}) = 0, \quad j=1,\dots,p.
$$
Note that, 
$$
\begin{aligned}
	X_{i,j} =& (1-\theta_{j,0} ) \suit{ G( R_{i,j}) -G(R_{i,j}^*) } +\mI\suit{ U_{i,j}>R_{i,j} }\suit{ G(U_{i,j})-G(R_{i,j})  } \\
			&-  \mI\suit{ U_{i,j}>R_{i,j}^* }\suit{ G(U_{i,j})-G(R_{i,j}^*)  }.
\end{aligned}
$$

Assume that both forecasts  are accurate.  When the target quantile level $\theta_{j,0}$ is small, i.e., $\theta_{j,0} = \theta_{j,0}(n) \to 0$ and $n\theta_{j,0} = O(1)$, the two indicator terms activate with probability $\theta_{j,0}\to0$ and the difference
$ G( R_{ij}) -G(R_{ij}^*)$ shrinks to zero. Consequently,  
the random variable $X_{i,j}$ shrinks to zero as $n\to\infty$.  
 The standard comparative test based on normal approximation \citep{nolde2017elicitability} again becomes invalid.

 The procedure outlined in Section \ref{sec:method} can instead be applied to test $H_{0,\text{Com}}$. In practice, the joint distribution of $R, R^*,$ and $U$ can be complex, which prevents a straightforward way to validate the general conditions  \ref{assum:joint:sequence:lyaponov} and \ref{assum:joint:sequence:highdim}.
 To ensure the boundedness condition required by Theorem \ref{theorem:size:improved}, we select a bounded, increasing function $G$. For all numerical experiments, we adopt the logistic function
$$
G(x) = \frac{1}{\exp(-x)+1}.
$$

\section{Simulation}\label{sec:simulation}

In this section, we present a simulation study to illustrate the finite sample performance of our testing procedures.   
The samples are generated from the following models.

Define $\boldsymbol{\Sigma}_1$ as a $p\times p$ matrix with elements $(\sigma_{ij})_{1\le i\le p, 1\le j\le p}$, for $\sigma_{ii} =1$, $\sigma_{ij} = 0.7$ for  $(i,j) \in \{(2k-1,2k),(2k,2k-1)\}: k =1,2, \dots, \lfloor p/2\rfloor\}$ and $\sigma_{ij}=0$ otherwise.  
 Define $\boldsymbol{\Sigma}_2$ as a $p\times p$ matrix with elements $(\sigma_{ij})_{1\le i\le p, 1\le j\le p}$, where $\sigma_{ij} = 0.5^{|i-j|}$. Define $z_\alpha$ as the $\alpha$ quantile of a standard normal distribution. 

 We start by considering two data generating processes that resemble the setting of the validation 
backtest, where the observations are binary.

{\bf \noindent Model (A1)}.  We generate $(Z_{i,1},\dots, Z_{i,p})$, $i=1,\dots,p$ independently from $N(0,\boldsymbol{\Sigma}_1)$.  For $i=1,\dots,n$, $j=1,\dots, p$,  define 
$$
X_{i,j} =  \mI\suit{ Z_{i,j}>z_{1-\theta_j} }-0.01.
$$ 
Under the null hypothesis, $\theta_j = 0.01, j=1,\dots,p$.    Under the alternative hypothesis, a subset of the parameters deviates from this value. Specifically, among the $p$ dimensions, there are $p_0$ dimensions with $\theta_j$'s increased to $0.025$. Another $3p_0$ dimensions have a reduced $\theta_j$'s at $0.005$. The remaining $p-4p_0$ dimensions have unchanged $\theta_j$'s at $0.01$. In such a setting, while the $\theta_j$'s deviates from $0.01$ at $4p_0$ dimensions, their overall average remains at $0.01$.

{\bf \noindent Model (A2)}.    We generate $(Z_{i,1},\dots, Z_{i,p})$, $i=1,\dots,p$ independently from $N(0,\boldsymbol{\Sigma}_2)$. Other settings are the same as in Model (A1).

 Next, we consider two data generating processes with continuous observations, which resembles the setting of the comparative backtest. Simulations based on such general models demonstrate the generality of our testing procedure.

{\bf \noindent Model (B1)}. We generate $(Z_{i,1},\dots, Z_{i,p})$, $i=1,\dots,p$ independently from $N(0,\boldsymbol{\Sigma}_1)$.   For $i=1,\dots,n$, $j=1,\dots, p$,  define  
$$
X_{i,j} = g(\Phi(Z_{ij}))  - \mu_j, 
$$
where
$$
g(x) = \begin{cases}
	\frac{2\alpha_n}{1-\alpha_n}x - \alpha_n, & \text{if} \ 0\le x\le 1-\alpha_n, \\
	\frac{2}{\alpha_n} x + 1-	\frac{2}{\alpha_n} , & \text{if}\ 1-\alpha_n <x \le 1, 
\end{cases}
$$
with $\alpha_n = 0.01$. 
By such transformation, $g(\Phi(Z_{ij}))$ follows a mixture distribution $(1-\alpha_n) U(-\alpha_n, \alpha_n) +\alpha_n U(-1,1)$. Here, $U(a,b)$ denotes the uniform distribution on the interval $(a,b)$.  That means $X_{i,j}$ has a mean $-\mu_j$.

Under the null hypothesis,  we set $\mu_j = 0, j=1,\dots,p$.  Under the alternative hypothesis, a subset of the parameters deviates from this value. Specifically,  for $p_0$ dimensions, we set their $\mu_j$'s  to $-0.0075$, while for another $p_0$ dimensions, we set their $\mu_j$'s to $0.0075$. The remaining $p-2p_0$ parameters remain unchanged at $0$.  In such a setting, among the $p$ dimensions, there are $2p_0$ parameters deviating from the null, while the overall average of the parameters remains at $0$.

{\bf \noindent Model (B2)}.    We generate $(Z_{i,1},\dots, Z_{i,p})$, $i=1,\dots,p$ independently from $N(0,\boldsymbol{\Sigma}_2)$. Other settings are the same as in Model (B1).

We generate data from the above four  data generating processes with $n = 500, p = 100$ and $p_0 = 20$. We compare our test procedures, {\it the naive test}  and {\it the subsets-based pooling test},   with the conventional maximum-type mean test without pooling \citep{chang2017simulation}, which we refer to as the {\it marginal test}. The detailed procedure is described below.

 \begin{itemize}
 \item Consider the test statistic
 $$
\mathcal{M}_{m} = \max_{1\le j \le p}\abs{  \frac{\sum_{i=1}^n X_{i,j}}{\set{\suit{\sum_{i=1}^n (X_{i,j} - \frac{1}{n}\sum_{i=1}^n X_{i,j} }^2}^{1/2} } }. 
$$
 	\item Draw i.i.d. observations $\xi_i\sim N(0,1)$, $i=1,\dots,n$.
 	\item Define 
 	$$
 	\mathcal{M}_{m,B}  =  \max_{1\le j \le p}\abs{  \frac{\sum_{i=1}^n \xi_i X_{i,j}}{\set{\suit{\sum_{i=1}^n (X_{i,j} - \frac{1}{n}\sum_{i=1}^n X_{i,j} }^2}^{1/2} } }. 
 	$$ 
 	  \item Define $c_{m,\alpha}$ as the $1-\alpha$ conditional quantile of $\mathcal{M}_{m,B}$ given the data $(X_{i,1},   \dots, X_{i,p})$, $i=1,\dots,n$.
 	  \item Reject  $H_0$ when  $\mathcal{M}_{m}>c_{m,\alpha}$.
 \end{itemize}

 First, we fix the  value of $d = 2p$ and vary the subsets size
  $$q \in \{9,19,29,39,49,59,69,79,89\}.$$ 
 These choices of $q$ ensure that $gcd(p,q)=1$.  The first $p$ subsets are chosen according to Lemma~\ref{lemma:main:subsets}. The remaining $d-p$  subsets are generated by randomly sampling $q$ distinct elements from the index set $\set{1,\dots, p}$.

 The results are presented in Figure \ref{fig:q}. We observe that the marginal test fails. This can be attributed to the failure of asymptotic normality for marginal dimensions. The naive test performs well in terms of empirical size because it pools all observations, making the effective sample size sufficiently large. However, it lacks power when marginal means with opposite signs cancel out in the average, as is the case in our simulation setting. 

The choice of  $q$ involves a size-power trade-off for our proposed subsets-based pooling test. Increasing $q$ enlarges the sample size within each subset, improving size accuracy. However, setting $q$ too large can again lead to the problem of marginal means with opposite signs canceling out in the average.

 Next, we fix the value of $q=49$ and vary the values of $d$ from $100$ to $500$. 
 The results are presented in Figure \ref{fig:d}. 
 We observe that, within this range, the empirical size remains largely unchanged, while increasing $d$ may improve the power of the test. However, a larger $d$ also increases computational cost, and excessively large values may violate Condition \ref{assum:joint:sequence:highdim}. Based on our simulation study, we recommend choosing $d = \beta p$ with $\beta \in [2,5]$.

\begin{figure}[htbp]
\centering
\begin{subfigure}{1\textwidth}
\includegraphics[width=\linewidth]{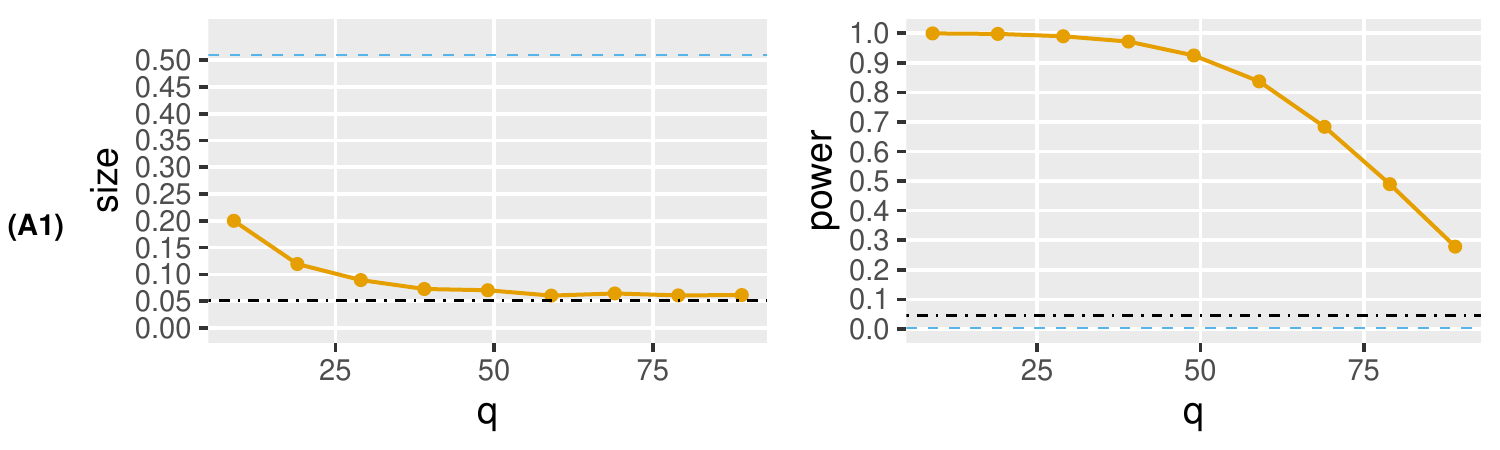}
\end{subfigure}
\begin{subfigure}{1\textwidth}
\includegraphics[width=\linewidth]{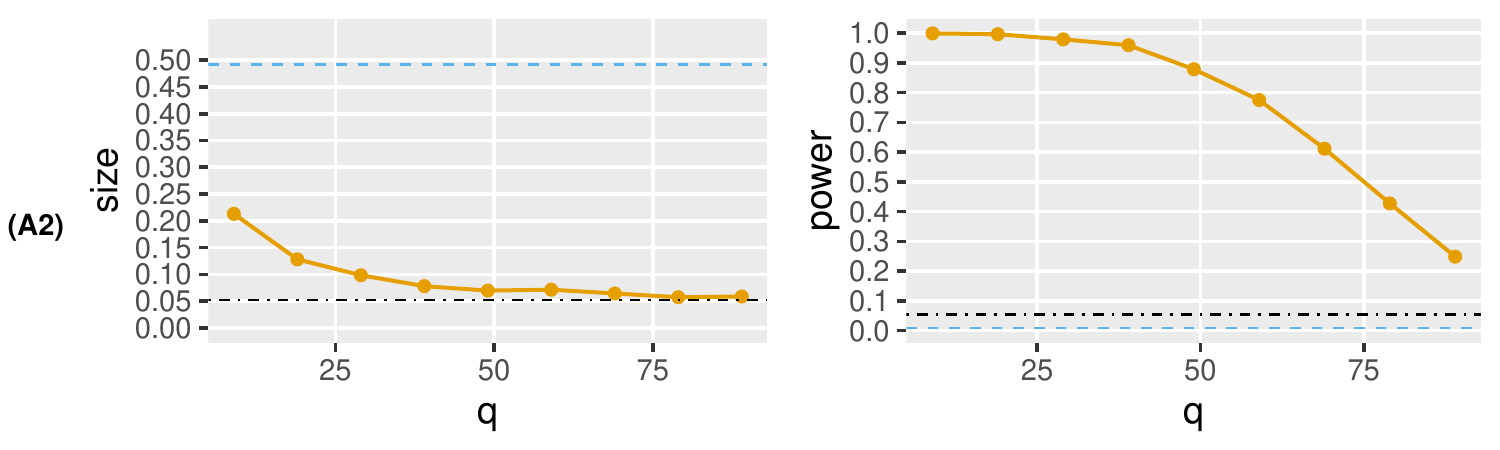}
\end{subfigure}
\begin{subfigure}{1\textwidth}
\includegraphics[width=\linewidth]{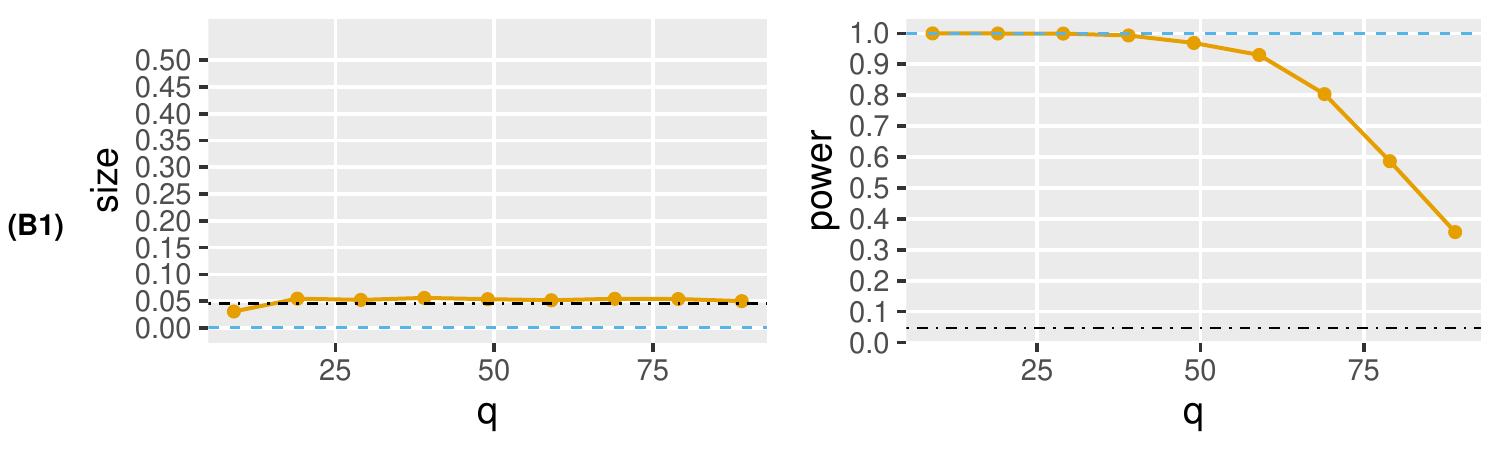}
\end{subfigure}
\begin{subfigure}{1\textwidth}
\includegraphics[width=\linewidth]{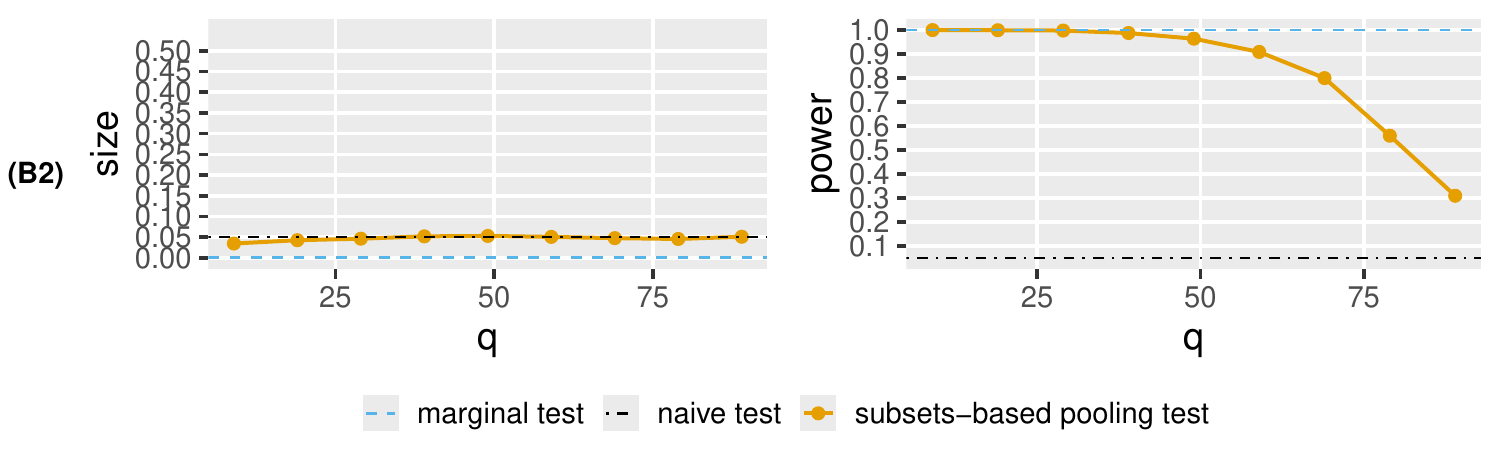}
\end{subfigure}
\caption{The empirical size (left) and power (right) of subsets-based pooling test, marginal test, 
and naive test  over different levels of $q$. }
\label{fig:q}
\end{figure}

\begin{figure}[htbp]
\centering
\begin{subfigure}{1\textwidth}
\includegraphics[width=\linewidth]{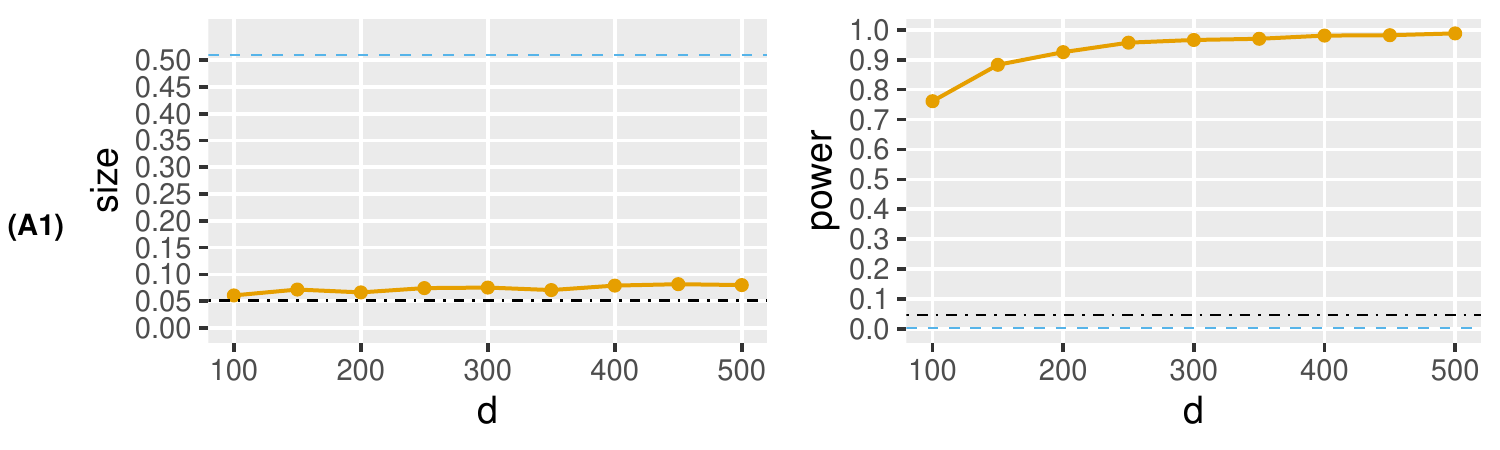}
\end{subfigure}
\begin{subfigure}{1\textwidth}
\includegraphics[width=\linewidth]{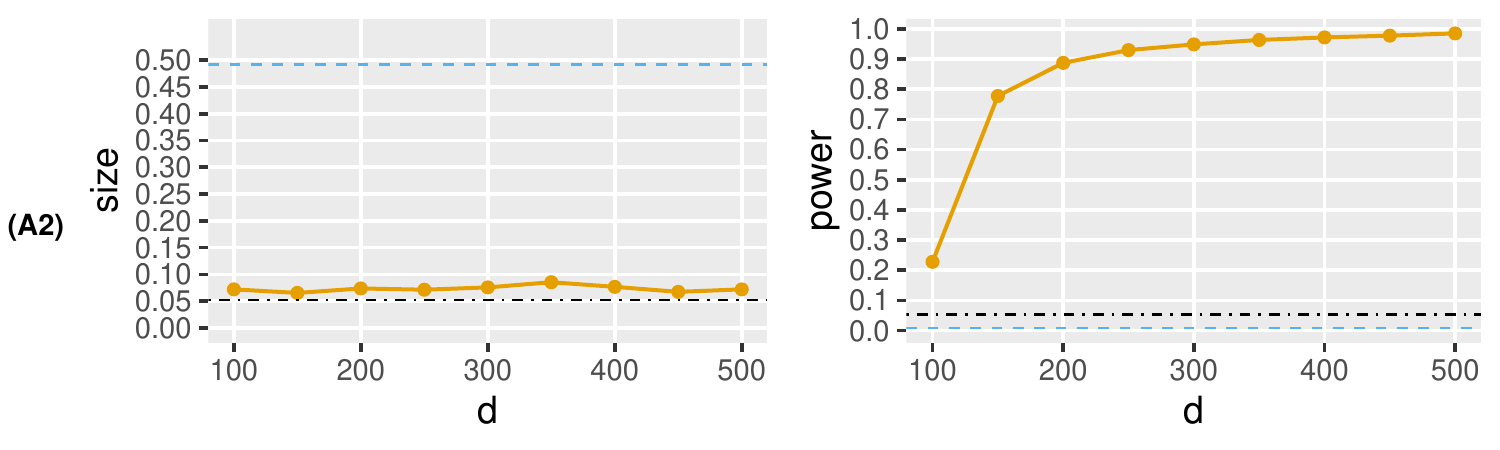}
\end{subfigure}
\begin{subfigure}{1\textwidth}
\includegraphics[width=\linewidth]{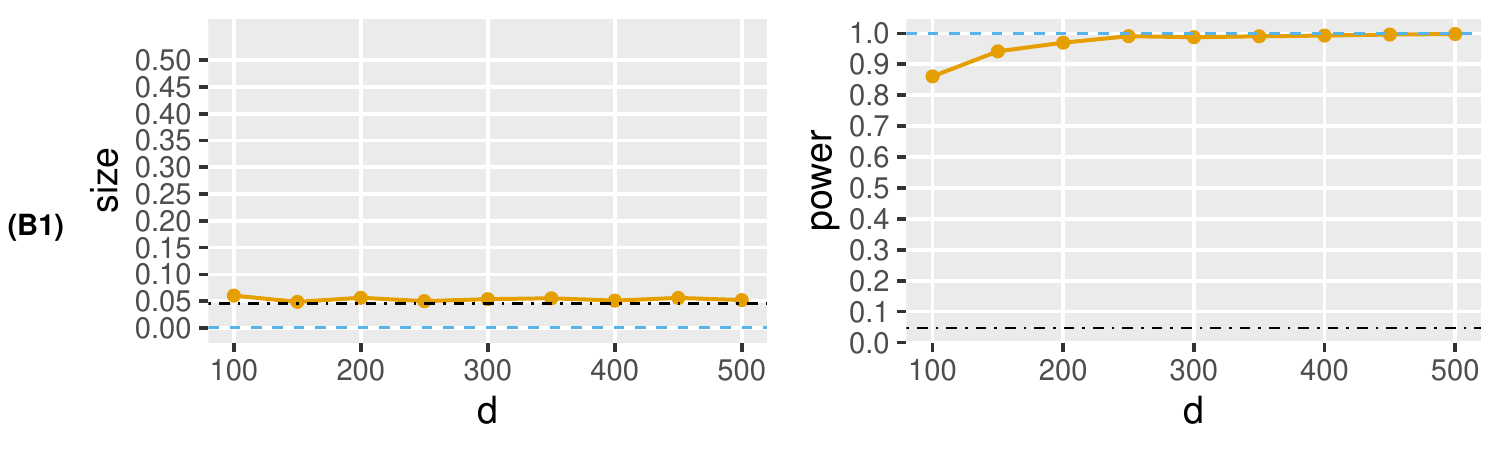}
\end{subfigure}
\begin{subfigure}{1\textwidth}
\includegraphics[width=\linewidth]{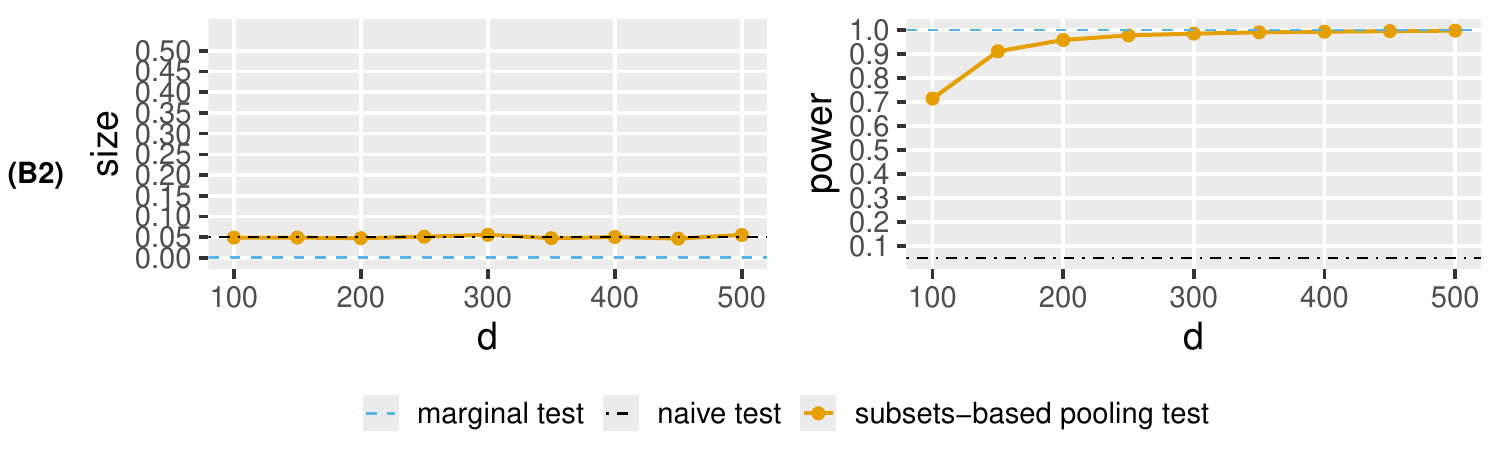}
\end{subfigure}
\caption{The empirical size (left) and power (right) of subsets-based pooling test, marginal test, 
and naive test  over different levels of $d$. }
\label{fig:d}
\end{figure}

\section{Real data Application} \label{sec:application}

In this section, we illustrate the real data application of the proposed methods in a backtesting context. We focus on the constituents of the S\&P 100 index and
consider a two-year backtesting period from January 1, 2023, to December 31, 2024. Over this horizon, we obtain
 $n=502$ daily observations of negative log returns (losses) for each constituent used in the backtesting.

We use various VaR forecasting methods to forecast the one-day-ahead conditional VaR on each day in the backtesting period, and then compare their performance. To conduct conditional VaR forecasting, for each constituent, the dynamics of losses are modeled using an AR(1)-GARCH(1,1) time series model \citep{mcneil2000estimation}. We employ a rolling estimation window consisting of 3,000 observations prior to the backtesting period. Consequently, each constituent must have at least 3,000 historical observations prior to January 1, 2023. Constituents not meeting this criterion are excluded, leaving $p=92$ constituents for the analysis. 

Let $U_{i,j}$ denote the negative log-returns at time $i$ for constituent $j$. For each $j$, 
the AR(1)-GARCH(1,1) model is defined as 
\begin{align*}
	U_{i,j} = \mu_{i,j} +\sigma_{i,j} Z_{i,j}, \quad
	\mu_{i,j} = \alpha_{0,j} + \alpha_{1,j} U_{i-1,j},\quad  
	 \sigma_{i,j}^2 = \beta_{0,j} +\beta_{1,j}  \sigma_{i-1,j}^2Z_{i-1,j}^2+\beta_{2,j} \sigma_{i-1,j}^2,
\end{align*}
where the innovations $Z_{i,j}$ are i.i.d. with zero mean and unit variance. The parameters are estimated by maximum likelihood assuming a standardized skew-t distribution for $Z_{i,j}$ \citep{fernandez1998bayesian}. Nevetheless, we do not necessarily making use of such a distributional assumption. Instead, using the estimated conditional means and volatilities, we obtain the filtered residuals
$$
\widehat{Z}_{i,j} = \frac{U_{i,j} -\widehat{\mu}_{i,j} }{\widehat{\sigma}_{i,j} },
$$
which serve as proxies for the realized innovations. The one-step-ahead dynamic VaR forecasts at the quantile level $1-\theta_0$  are then given by 
$$
\widehat{\text{VaR}}_{i,j}(\theta_0) = \widehat{\mu}_{i,j}  + \widehat{\sigma}_{i,j} \widehat{\text{VaR}}_{Z,j}(\theta_0),
$$
where $\widehat{\text{VaR}}_{Z,j}(\theta)$ denotes the estimated VaR of the innovations $Z_{i,j}$. Throughout the empirical application, we set $\theta_0 =0.01$.
We consider three different approaches for the estimation  of $\widehat{\text{VaR}}_{Z,j}(\theta_0)$.

\begin{itemize}
	\item[(1)] {\it Empirical method}: VaR is estimated based on the empirical quantiles of the filtered residuals.
 \item[(2)] {\it Standardized skew-t (SSTD) method}: VaR is computed from the quantiles of the standardized skew-t distribution fitted to the filtered residuals. This method is a parametric approach that relies on the distributional assumption of the innovations.
 \item[(3)] {\it Extreme value theory (EVT) method}: VaR is estimated using extreme value statistics with the number of tail observations setting to $50$, see Chapter 4.3 of \cite{haan2006extreme} for details. 
  \end{itemize}

Before applying the proposed methodology to the data, we first check the appropriateness of several underlying assumptions.  We start with Condition \ref{assum:mixing}, which is the key requirement for our analysis. A direct test for  Condition \ref{assum:mixing} may not be  feasible in practice.  Nevertheless, in the 
context of validation test, the dependence among the indicators is 
driven by the extreme events of the losses exceeding corresponding 
marginal quantiles. Therefore, evaluating tail dependence among the 
residuals provides useful insight regarding the validity of this condition.
 Specifically, we compute the upper tail dependence coefficients \citep{sibuya1960bivariate} between the filtered residuals $\widehat{Z}_{i,j}$ from the AR(1)-GARCH(1,1) models,
$$
\lambda_{j_1,j_2} = \frac{1}{nu}\sum_{i=1}^n \mI\suit{\widehat{F}_{n,j_1}(\widehat{Z}_{i,j_1})> u,\widehat{F}_{n,j_2}(\widehat{Z}_{i,j_2})>u   },
$$
where $\widehat{F}_{n,j}$ denotes the empirical cumulative distribution function of $\widehat{Z}_{1,j},\dots, \widehat{Z}_{n,j}$. The threshold $u$ is a 
  small positive constant,  which we set to $u=0.01$ reflecting our choice of $\theta_0$.

 Figure \ref{figure:heatplot} shows that tail dependence is generally weak across different sectors, and each constituent exhibits strong tail dependence only with a small number of other constituents. Based on these observations, we conclude that the tail dependence within this dataset has a ``block wise'' structure. This supports the $m$-dependent structure required by Condition \ref{assum:mixing}.

Recall that Condition \ref{assum:joint:var:low:bound} requires the absence of strong multicollinearity. We argue that this condition is generally easy to satisfy in practice. With assuming market efficiency, there is no arbitrage opportunities in financial markets. Therefore, perfect or near-perfect collinearity  should not exist in financial markets.  We do not further check the technical conditions in Condition \ref{assum:joint:sequence:lyaponov} and \ref{assum:joint:sequence:highdim}.

\begin{figure}[htbp]
	\centering
		\includegraphics[width = 1\textwidth]{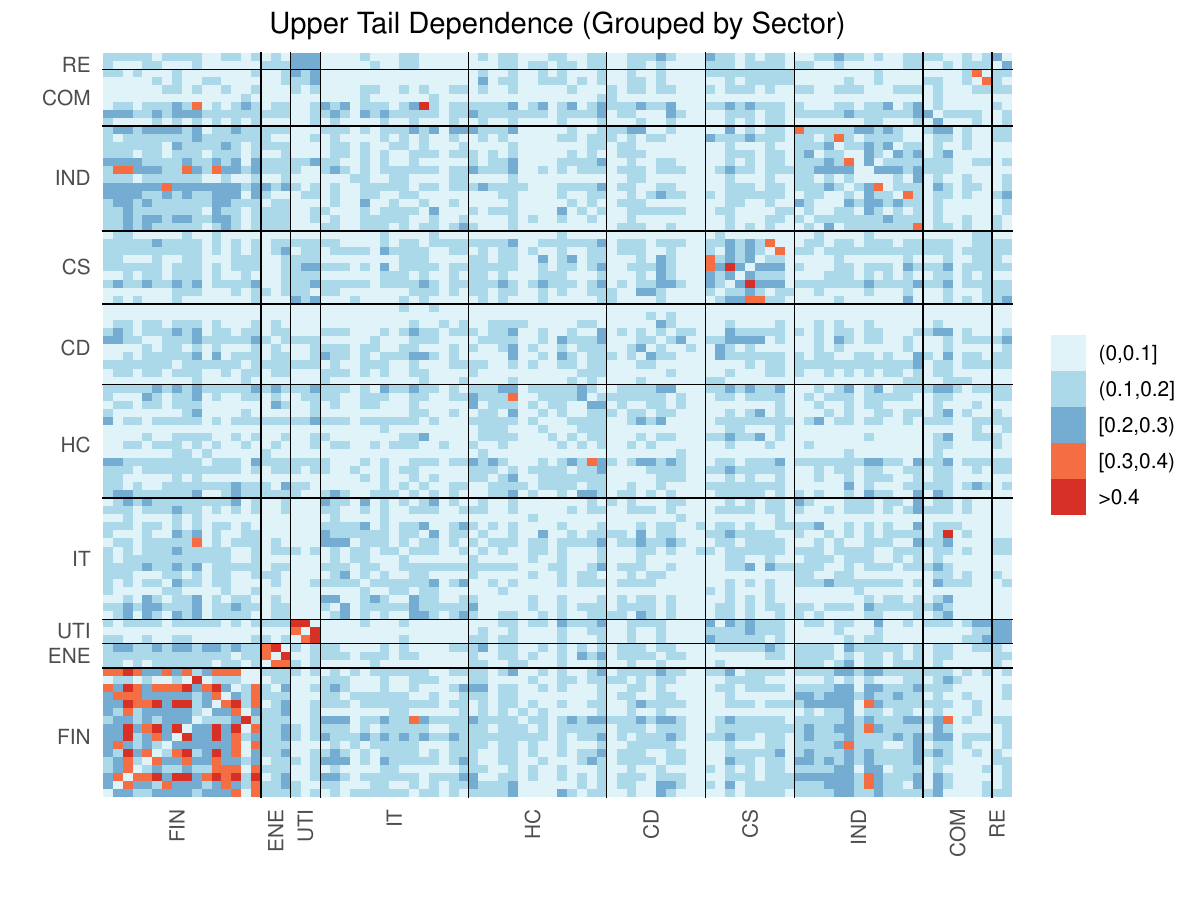}
	\caption{Heatmap of the upper tail dependence coefficients for the filtered residuals in the first training period for the  S\&P constituents. Each row and column corresponds to a stock, and stocks are grouped by the Global Industry Classification Standard (GICS)  employed by Standard and Poor's: Financials (FIN), Energy (ENE), Utilities (UTIL), Information Technology (IT), Health Care (HC), Consumer Discretionary (CD), Consumer Staples (CS), Industrials (IND), Communication Services (COM), and Real Estate (RE). }
	\label{figure:heatplot}
\end{figure}

We then conduct both validation tests and comparative backtests for the three VaR forecasting methods. We set the parameters $q=49$ and $d=300$ for the proposed tests.  The subsets are constructed in the same way as in Section \ref{sec:simulation}.
 Table \ref{tab:pvalues} shows the $p-$values of the validation test in the diagonal element for each method respectively, and presents those of the comparative tests in the off-diagnoal elements.  The $p$-values in the diagonal indicate that all three methods pass the validation tests, concluding that all methods provide reliable VaR estimates. By contrast, the comparative tests are significant, with $p$-values below 0.05, allowing us to rank the methods: the SSTD method performs best, followed by the EVT method as a semi-nonparametric method, and then the empirical method. 
 
The superior performance of the SSTD method can be attributed to the 
goodness-of-fit when modeling the residuals by the skew-t model. 
Compared to the SSTD method, the EVT method imposes parametric 
assumptions only on the tail of the distribution, leading to relatively 
higher estimation uncertainty and thus the relatively worse performance 
in VaR forecasting. Nevertheless, these assumptions are consistent with 
the distribution of the residuals, particularly the tail part. This 
explains the better performance of the EVT method than the empirical 
approach. The empirical method is the most flexible setup, bearing no 
risk in model misspecification. However it suffers from the highest 
level of estimation uncertainty among the three methods employed, 
especially when the sample size is limited.

\begin{table}[htbp]
\centering
\begin{tabular}{cccc}
  \hline
& Emp & EVT & SSTD \\ 
  \hline
Emp & 0.750 &  &  \\ 
  EVT & $<0.001$ & 0.907 &  \\ 
  SSTD & $<$0.001 & $<$0.001 & 0.697 \\ 
   \hline
\end{tabular}
	\caption{P-values from backtests of three estimation methods. Diagonal entries correspond to single-method validation backtests. Lower-triangular entries show two-method comparisons, where each p-value tests the null hypothesis that the method in the  column performs better than the method in the row. }
		\label{tab:pvalues}
\end{table}

% \FloatBarrier
% \bibliographystyle{apalike} 
% \bibliography{mybib.bib} 

\clearpage

\appendix

\setcounter{page}{1}
\renewcommand{\thepage}{S\arabic{page}}
\renewcommand{\theequation}{S\arabic{equation}}
\renewcommand{\thelemma}{S\arabic{lemma}}

\section{Proofs}
Throughout the proof, we use the following notations. For two sequences $a_n, b_n$,  $a_n \gtrsim b_n$ means that, $b_n/a_n =O(1)$,  
$a_n \asymp b_n$ means that, $a_n/b_n = O(1)$ and $b_n/a_n = O(1)$ as $n\to\infty$.

\subsection{ Proof of Theorem \ref{Theorem:naive}}

\begin{proof}[Proof of Theorem \ref{Theorem:naive}]
Under $H_0$, we have that
$$
T = \frac{\sum_{i=1}^n Y_i  }{\sqrt{n \widehat{\sigma} } } =\sqrt{ \frac{\sigma }{\widehat{\sigma} }}  \frac{\sum_{i=1}^n (Y_i-\bE Y_i)  }{\sqrt{n\sigma}  },
$$
where 
$$
\sigma =\frac{1}{n}\sum_{i=1}^n \Var(Y_i).
$$

We first show that, as $n\to\infty$,
\begin{equation}\label{s:norm:lyapunov}
\frac{\sum_{i=1}^n (Y_i-\bE Y_i)}{\sqrt{n \sigma } } \stackrel{d}{\to} N(0,1).	
\end{equation}
We intend to apply the Lyapunov central limit theorem to establish the asymptotic normality. To this end, we verify the corresponding conditions. 

By Condition \ref{assum:mixing} and  Corollary 1.1 of \cite{shao1995maximal}, we have that, 
$$
\begin{aligned}
	\bE \suit{Y_i-\bE Y_i}^{4} \le & C \suit{ p \max_{1\le j\le p} \bE \suit{X_{i,j}-\mu_j}^{4} +\suit{p\max_{1\le j\le p} \bE(X_{i,j}-\mu_{j})^2 } ^{2}  }, \\
								= & O(1) \suit{ p \sfmax +p^2 \suit{\ssmax}^2  },
\end{aligned}
$$
uniformly for all $1\le i\le n$. 
By Condition \ref{assum:var:low:bound}, we have that,  as $n\to\infty$,
$$
\bE (Y_i-\bE Y_i) ^2   \ge  c\sum_{j=1}^p \bE(X_{i,j}-\mu_{j})^2 \gtrsim p\ssmin,
$$
uniformly for all $1\le i\le n$. 
By  Condition \ref{assum:lyaponov}, we conclude that, as $n\to\infty$,
\begin{equation}\label{s:lyaponov:condition}
\begin{aligned}
 \frac{\sum_{i=1}^n \bE (Y_i-\bE Y_i)^{4}}{\set{\sum_{i=1}^n\bE(Y_i -\bE Y_i )^2 }^{2}  }  =  O(1) \frac{n p\sfmax  + np^2 \suit{\ssmax}^2   }{ \suit{np\ssmin}^2 } =o(1).
\end{aligned}	
\end{equation}
Thus, the Lyapunov condition holds. By  the Lyapunov central limit theorem, we conclude that \eqref{s:norm:lyapunov} holds.

Next, we show that, as $n\to\infty$,   
\begin{equation}\label{s:var:converge:uni}
	\frac{\widehat{\sigma}}{\sigma } \stackrel{P}{\to} 1.
\end{equation}
Write 
\begin{equation}\label{s:eq:rewrite:sigma00}
\begin{aligned}
\widehat{\sigma} =  	\frac{1}{n}\sum_{i=1}^n \suit{Y_i - \frac{1}{n}\sum_{i=1}^n Y_i  }^2 
 = 	\frac{1}{n}\sum_{i=1}^n \suit{Y_i - \bE Y_i  }^2 - \suit{\frac{1}{n}\sum_{i=1}^n Y_i-\bE Y_i }^2.
 \end{aligned}	
\end{equation}

We first handle the first term on the right-hand side of \eqref{s:eq:rewrite:sigma00}. 
Note that, 
$$
\begin{aligned}
\bE \set{\frac{1}{n \sigma	}\sum_{i=1}^n \suit{Y_i -\bE Y_i  }^2-1} = 0.
\end{aligned}
$$
By \eqref{s:lyaponov:condition}, we have that, as $n\to\infty$, 
$$
\begin{aligned}
	\text{Var} \set{\frac{1}{n\sigma }\sum_{i=1}^n \suit{Y_i - \bE Y_i   }^2-1} = &  \frac{1}{n^2 \sigma^2}   \sum_{i=1}^n \text{Var}(Y_i-\bE Y_i)^2 \\
	\le &  \frac{1}{\sigma^2 n^2} \sum_{i=1}^n  \bE (Y_i-\bE Y_i)^4 \\
	= & O(1) \frac{n p\sfmax  + np^2 \suit{\ssmax}^2   }{ \suit{np\ssmin}^2 } \\
	= & o(1).
\end{aligned}
$$
Combining the limits of the mean and variance,  we obtain that, as $n\to\infty$,
$$
\frac{1}{n \sigma	}\sum_{i=1}^n (Y_i - \bE Y_i  )^2\stackrel{P}{\to} 1.
$$

Next, we handle the second term on the right-hand side of \eqref{s:eq:rewrite:sigma00}. 
By \eqref{s:norm:lyapunov}, we have that, as $n\to\infty$,
$$
\frac{1}{\sigma }\suit{\frac{1}{n}\sum_{i=1}^n Y_i-\bE Y_i}^2 \stackrel{P}{\to} 0.
$$
Thus, we conclude that, \eqref{s:var:converge:uni} holds.

Combining \eqref{s:norm:lyapunov} and \eqref{s:var:converge:uni}, and applying the  Slutsky lemma, we complete the proof.
 \end{proof}

\subsection{Proofs of Theorem \ref{theorem:size:improved} }

Throughout the rest of the proofs, 
let $a_n$ and $b_n$ be two positive sequences satisfying  
\begin{align}
	\frac{b_n^2}{ q \ssmax\log (a_n nd) + \log^4 q\log^2 (a_n nd)}   \to & \infty, \label{s:an:bn:lemma:bound:Y}   \\
	\frac{q \sfmax +q^2 \suit{\ssmax}^2}{a_n}\to &0, \label{s:an:mean:dif}\\ 
	\frac{nq\ssmin }{  b_n^2\log^3 d }  \to &\infty. \label{s:an:upper}
\end{align}
One choice for $a_n$ and $b_n$ is 
\begin{equation}\label{s:def:an:bn}
	\begin{aligned}
a_n =&  \suit{nq \ssmin}^2, \\
b_n =& \suit{\frac{nq\ssmin}{\log^3 d}   \set{ q \ssmax\log (a_n nd) + \log^4 q\log^2 (a_n nd)} }^{1/4}.
\end{aligned}
\end{equation}
By Condition \ref{assum:joint:sequence:lyaponov} and \ref{assum:joint:sequence:highdim}, we can show that, as $n\to\infty$,
 \eqref{s:an:bn:lemma:bound:Y},  \eqref{s:an:mean:dif} and \eqref{s:an:upper} hold. 

Before proving Theorem \ref{theorem:size:improved}, we establish some preliminary lemmas. 

\begin{lemma}\label{lemma:upper:bound:Y}
Assume the same conditions as in Theorem \ref{theorem:size:improved}.   Then,     as $n\to\infty$,
$$
a_n\Pr\suit{ \max_{1\le i\le n} \max_{1\le \ell \le d} \abs{Y_i^{(\ell)}-\bE Y_i^{(\ell)} } \ge b_n }\to 0. 
$$	
\end{lemma}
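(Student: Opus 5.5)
Our plan is a union bound over the $nd$ pairs $(i,\ell)$, followed by a Bernstein-type exponential tail bound for each sum $Y_i^{(\ell)}-\bE Y_i^{(\ell)}=\sum_{j\in\mS_\ell}(X_{i,j}-\mu_j)$. Write $W_{i,j}=X_{i,j}-\mu_j$. By boundedness, $\abs{W_{i,j}}\le 2C$ and $\bE W_{i,j}^2\le \ssmax$. Then
$$
a_n\Pr\Bigl(\max_{i,\ell}\abs{Y_i^{(\ell)}-\bE Y_i^{(\ell)}}\ge b_n\Bigr)\le a_n nd\max_{i,\ell}\Pr\Bigl(\Bigl|\sum_{j\in\mS_\ell}W_{i,j}\Bigr|\ge b_n\Bigr),
$$
so it suffices to show that each single tail probability is $o\bigl((a_nnd)^{-1}\bigr)$ uniformly in $(i,\ell)$.

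The summands in each subset are dependent, so the key tool is a Bernstein inequality for bounded $\rho$-mixing (or geometrically $\alpha$-mixing, since $\alpha(m)\le\rho(m)/4$) sequences. A natural choice is the inequality of Merlev\`ede, Peligrad and Rio for geometrically strong-mixing sequences. For $m$ summands bounded by $M$ with $\alpha(k)\le e^{-2c_0k}$, it gives
$$
\Pr\Bigl(\Bigl|\sum W\Bigr|\ge x\Bigr)\le \exp\Bigl(-\frac{c\,x^2}{m v^2+M^2+xM\log^2 m}\Bigr),
$$
where $v^2$ is a long-run variance proxy. Under Condition \ref{assum:mixing}, $v^2\lesssim \ssmax$, because the covariances are bounded by $\rho(k)\ssmax$ and these are summable.

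One point needs care. The subset $\mS_\ell$ is a cyclic window, or for $\ell>p$ an arbitrary set of indices. We order its indices increasingly; the sequence remains $\rho$-mixing with the same rate, since index gaps only increase. For a cyclic window we split it into at most two contiguous blocks and handle each separately. With $m=q$, $x=b_n/2$ and $M=2C$, the exponent is of order
$$
\frac{b_n^2}{q\ssmax+1+b_n\log^2 q}.
$$

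It then remains to check that this exponent exceeds $K\log(a_nnd)$ for every fixed $K$, i.e. tends to infinity after dividing by $\log(a_nnd)$. The term $b_n^2/(q\ssmax\log(a_nnd))\to\infty$ follows directly from \eqref{s:an:bn:lemma:bound:Y}. For the mixed term we need $b_n/(\log^2 q\log(a_nnd))\to\infty$, which is equivalent to $b_n^2/(\log^4 q\log^2(a_nnd))\to\infty$, again exactly part of \eqref{s:an:bn:lemma:bound:Y}. The constant term is dominated by these two. Hence the tail bound is $\exp(-K_n\log(a_nnd))$ with $K_n\to\infty$, and multiplying by $a_nnd$ gives $o(1)$.

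The main obstacle is the dependent Bernstein step. We must make sure the cited inequality applies to $\rho$-mixing triangular arrays with a uniform mixing rate in $i$, and that the variance term is indeed $\lesssim q\ssmax$ rather than involving a lower-order but possibly larger bound. The exact shape of \eqref{s:an:bn:lemma:bound:Y}, with its $\log^4 q\log^2$ term, strongly suggests this is precisely the Merlev\`ede--Peligrad--Rio form, so we would invoke that result.
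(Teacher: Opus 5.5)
Your proposal is correct and follows essentially the same route as the paper. Both use a union bound over the $nd$ pairs $(i,\ell)$ and Theorem~2 of Merlev\`ede, Peligrad and Rio, applied via $\alpha_i(m)\le\rho_i(m)/4$, with variance proxy $v_\ell^2=O(\ssmax)$ (the paper obtains this from Lemma~2.1 of Shao), and then use \eqref{s:an:bn:lemma:bound:Y} to show that the exponent divided by $\log(a_nnd)$ diverges. Your remark about ordering the indices of $\mS_\ell$ increasingly, so that the mixing rate is preserved because index gaps only grow, is a point the paper leaves implicit; it makes the inequality's application to wrapped or randomly drawn subsets more transparent.
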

\begin{proof}[Proof of Lemma \ref{lemma:upper:bound:Y}]
Note that, $\rho$-mixing sequences are also $\alpha$-mixing \citep{bradley2005basic} with 
$$
\alpha_i(m)\le \rho_i(m)/4.
$$ 
By Condition \ref{assum:mixing} and  Theorem 2 of \cite{merlevede2009bernstein}, we have that,   there exist constants $C_1>0, C_2>0, C_3>0$, such that,
	$$
	\begin{aligned}
		& a_n\Pr\suit{ \max_{1\le \ell \le d} \max_{1\le i\le n} \abs{Y_i^{(\ell)}-\bE Y_i^{(\ell)} }\ge b_n }\\ 
		\le & a_n nd\max_{1\le \ell \le d} \max_{1\le i\le n}
		\Pr\suit{\abs{Y_i^{(\ell)}-\bE Y_i^{(\ell)} }\ge b_n } \\
		=& a_nnd\max_{1\le \ell \le d} \max_{1\le i\le n}\Pr\suit{ \abs{\sum_{j\in \mS_{\ell}}  \suit{X_{i,j}-\mu_j  } }\ge  b_n }\\ 
		\le &a_n nd  \max_{1\le \ell \le d} \max_{1\le i\le n} \exp\suit{-C_1 \frac{ b_n^2}{v_{\ell}^2q +C_2+ C_3b_n\log^2q} },
	\end{aligned}
	$$
where
\begin{equation}\label{s:def:v2}
	v_{\ell}^2 = \max_{1\le i\le n}\sup_{ j \in  \mS_{\ell} }\suit{\text{Var}(X_{i,j}) +2 \sum_{ k\in \mS_{\ell},  k> j}\abs{\text{Cov}(X_{i,j},X_{i,k} ) } }.
\end{equation}
By Lemma 2.1 of \cite{shao1995maximal} and Conditions  \ref{assum:mixing},  we have that,  as $n\to\infty$,
\begin{equation}\label{s:eq:bound:v2}
v_{\ell}^2 =  O(1) \ssmax \max_{1\le i\le n} \sum_{m=1}^{\infty} \rho_{i}(m) =O(1)\ssmax,	
\end{equation}
uniformly for all $1\le \ell \le d$. Thus, by \eqref{s:an:bn:lemma:bound:Y}, we have that, 
$$
\frac{ b_n^2}{v_{\ell}^2q +C_2+ C_3b_n\log^2q} \frac{1}{\log (a_n nd)} \to \infty,
$$
uniformly for all $1\le \ell \le  d, 1\le i\le n$. It follows that,
  as $n\to\infty$,
$$
a_n\Pr\suit{ \max_{1\le i\le n} \max_{1\le \ell \le d}\abs{Y_i^{(\ell)}-\bE Y_i^{(\ell)} }\ge  b_n}\to 0. 
$$
\end{proof}

\begin{lemma}\label{lemma:fourth:moment}
Assume the same conditions as in Theorem \ref{theorem:size:improved}.
 Then,  as $n\to\infty$,   
	$$
		\begin{aligned}
		 \bE\suit{Y_i^{(\ell)}-\bE Y_i^{(\ell)} }^4 = O\suit{ q \sfmax+q^2 \suit{\ssmax}^2 },
		\end{aligned}
	$$
uniformly for all $1\le \ell \le d,\ 1\le i\le n$.
\end{lemma}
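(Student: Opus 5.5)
The bound will follow from the same fourth-moment inequality for $\rho$-mixing sums that was used in the proof of Theorem \ref{Theorem:naive}, now applied to the sum over a single subset instead of all $p$ coordinates. Fix $i$ and $\ell$ and write
$$
Y_i^{(\ell)}-\bE Y_i^{(\ell)}=\sum_{j\in\mS_\ell}\suit{X_{i,j}-\mu_j}.
$$
This is a sum of $q$ centered variables taken from the $\rho$-mixing vector $(X_{i,1},\dots,X_{i,p})$. Corollary 1.1 of \cite{shao1995maximal} bounds the $r$-th moment of such a partial sum (here $r=4$) by
$$
C\suit{ \sum_{j} \bE\abs{X_{i,j}-\mu_j}^{4} + \suit{\sum_j \bE(X_{i,j}-\mu_j)^2}^{2} },
$$
where $C$ depends only on the mixing rate through $\sum_m \rho_i(2^m)$. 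Under Condition \ref{assum:mixing} this series is at most $\sum_m e^{-2c_0 2^m}<\infty$ uniformly in $i$, so $C$ does not depend on $i$, $\ell$ or $n$. Bounding each of the $q$ fourth moments by $\sfmax$ and each second moment by $\ssmax$ then gives $O\suit{q\sfmax+q^2(\ssmax)^2}$ uniformly in $i$ and $\ell$.

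The main obstacle is that $\mS_\ell$ is generally not a block of consecutive indices. The cyclic subsets of Lemma \ref{lemma:main:subsets} wrap around from $p$ to $1$, and the remaining subsets are arbitrary. Shao's inequality, however, is stated for sums over consecutive indices of a mixing sequence. I would handle this as follows.

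\textbf{Cyclic subsets.} For $\ell\le p$, split $\mS_\ell$ into at most two runs of consecutive indices, $\{\ell,\dots,p\}$ and $\{1,\dots,\ell+q-1-p\}$. Apply the inequality to each run separately and combine with $(a+b)^4\le 8(a^4+b^4)$; the bound keeps the same order.

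\textbf{General subsets.} Enumerate $\mS_\ell=\{j_1<\dots<j_q\}$ and consider the subsequence $(X_{i,j_1},\dots,X_{i,j_q})$. Any $\sigma$-fields generated by its first $k$ terms and by its terms from position $k+m$ onward are sub-$\sigma$-fields of $\mathcal{F}_{i,1,j_k}$ and $\mathcal{F}_{i,j_{k+m},p}$ respectively. Since $j_{k+m}-j_k\ge m$ and $\rho_i$ is nonincreasing in the gap, the subsequence's mixing coefficient at gap $m$ is at most $\rho_i(m)\le e^{-2c_0m}$. So the subsequence satisfies Condition \ref{assum:mixing} with the same constant, and Shao's corollary applies to it directly. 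This argument also covers the cyclic case without any splitting.

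Boundedness of $X_{i,j}$ is not needed for this lemma; the moment quantities $\sfmax$ and $\ssmax$ suffice.
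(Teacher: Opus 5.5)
Your proposal is correct and follows the paper's proof. The paper likewise applies Corollary 1.1 of \cite{shao1995maximal} directly to $\sum_{j\in\mS_\ell}(X_{i,j}-\mu_j)$, then bounds the $q$ fourth moments by $\sfmax$ and the second moments by $\ssmax$. Your subsequence argument fills a step the paper leaves implicit: $(X_{i,j})_{j\in\mS_\ell}$ inherits mixing coefficients bounded by $\rho_i(m)\le e^{-2c_0m}$, which is what makes the constant uniform in $i$ and $\ell$. One minor correction: for fourth moments, Shao's condition involves $\sum_m\rho^{1/2}(2^m)$ rather than $\sum_m\rho(2^m)$, which is equally finite here.
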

\begin{proof}[Proof of Lemma \ref{lemma:fourth:moment}]
	By  Condition \ref{assum:mixing} and  Corollary 1.1 of \cite{shao1995maximal}, we have that, there exists a constant $C>0$, such that,  
	$$
\begin{aligned}
	 \bE\suit{Y_i^{(\ell)}-\bE Y_i^{(\ell)} }^4  =& \bE \suit{\sum_{j\in \mS_{\ell} } \suit{X_{i,j}-\mu_j }  }^4  \\
	 \le & C \suit{   q \max_{j\in \mS_{\ell}}  \bE\suit{X_{i,j}-\mu_j}^4+\set{q \max_{ j\in \mS_{\ell} }  \bE\suit{X_{i,j}-\mu_j}^2 }^2  },\\
	 =& O\suit{ q \sfmax+q^2 \suit{\ssmax}^2 }.
\end{aligned}	
	$$ 
	uniformly for all $1\le \ell \le d, 1\le i\le n$. 	
\end{proof}

Denote
$$
\widetilde{\sigma}_{\ell_1\ell_2} = \frac{1}{n}\sum_{i=1}^n \suit{Y_i^{(\ell_1)} -\bE Y_i^{(\ell_1)} }\suit{Y_i^{(\ell_2)} -\bE Y_i^{(\ell_2)}  }, 1\le \ell_1,\ell_2\le d,
$$
as a pseudo-empirical estimator of the quantity
$$
\sigma_{\ell_1\ell_2} = \frac{1}{n}\sum_{i=1}^n  \textnormal{Cov}\suit{Y_i^{(\ell_1)}, Y_i^{(\ell_2)}}.
$$
\begin{lemma}\label{lemma:var:joint}
Assume the same conditions as in Theorem \ref{theorem:size:improved}.  Then, as $n\to\infty$, 
$$
 \max_{1\le \ell_1 \le \ell_2\le d}\abs{\frac{\widetilde{\sigma}_{\ell_1\ell_2} -\sigma_{\ell_1\ell_2}}{ \sqrt{\sigma_{\ell_1\ell_1}\sigma_{\ell_2\ell_2}  } }}=o_P(1/\log^2d).
$$
\end{lemma}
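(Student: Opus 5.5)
The plan is to truncate at the level $b_n$ and then combine a union bound over the $d^2$ pairs with a Bernstein inequality for independent bounded summands. Write $W_i^{(\ell)} = Y_i^{(\ell)}-\bE Y_i^{(\ell)}$. By Condition \ref{assum:joint:var:low:bound}, $\sigma_{\ell\ell}\gtrsim q\ssmin$ uniformly in $\ell$, so it suffices to show
\[
\max_{\ell_1,\ell_2}\abs{\frac1n\sum_{i=1}^n\suit{W_i^{(\ell_1)}W_i^{(\ell_2)}-\bE W_i^{(\ell_1)}W_i^{(\ell_2)}}} = o_P\suit{\frac{q\ssmin}{\log^2 d}} .
\]
Define the truncated variables $\bar W_i^{(\ell)} = W_i^{(\ell)}\mI(\abs{W_i^{(\ell)}}\le b_n)$. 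By Lemma \ref{lemma:upper:bound:Y}, and since $a_n\to\infty$, the event $\max_{i,\ell}\abs{W_i^{(\ell)}}\ge b_n$ has probability $o(1/a_n)=o(1)$. On its complement, $W=\bar W$ for every $i$ and $\ell$, so in the sum it is enough to replace $W$ by $\bar W$.

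Next I would control the centering bias created by truncation. By Cauchy--Schwarz,
\[
\abs{\bE W^{(\ell_1)}W^{(\ell_2)}-\bE \bar W^{(\ell_1)}\bar W^{(\ell_2)}} \le 2\suit{\bE (W^{(\ell_1)})^2 (W^{(\ell_2)})^2}^{1/2}\, \Pr\suit{\max_{\ell}\abs{W^{(\ell)}}>b_n}^{1/2}.
\]
Lemma \ref{lemma:fourth:moment} bounds the first factor by $O\suit{q\sfmax+q^2(\ssmax)^2}^{1/2}$. I would bound the second factor by the per-$i$ tail from the proof of Lemma \ref{lemma:upper:bound:Y}, which is $o(1/a_n)$. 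Condition \eqref{s:an:mean:dif} then makes the product $o(1)$. Against the target $q\ssmin/\log^2 d$, the specific choice $a_n=(nq\ssmin)^2$ in \eqref{s:def:an:bn} leaves plenty of room, giving a bias of order $o\suit{(q\sfmax+q^2(\ssmax)^2)^{1/2}/(nq\ssmin)}$. This is negligible because $q\ssmax/(nq\ssmin)\cdot\log^2 d\to0$ by Conditions \ref{assum:joint:sequence:lyaponov}–\ref{assum:joint:sequence:highdim}, with the fourth-moment part handled the same way.

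For the main term, the summands $Z_i=\bar W_i^{(\ell_1)}\bar W_i^{(\ell_2)}-\bE(\cdot)$ are independent over $i$, bounded by $2b_n^2$, and satisfy $\Var(Z_i)\le \bE (W^{(\ell_1)})^2(W^{(\ell_2)})^2 = O(q\sfmax+q^2(\ssmax)^2)$. Bernstein's inequality with $t=\epsilon\, q\ssmin/\log^2 d$ plus a union bound over $d^2$ pairs gives a failure probability of at most
\[
2d^2\exp\set{-c\min\suit{\frac{n t^2}{q\sfmax+q^2(\ssmax)^2},\ \frac{nt}{b_n^2}}}.
\]
Both exponents must dominate $\log d$. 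For the second one, $nt/(b_n^2\log d)\asymp nq\ssmin/(b_n^2\log^3 d)\to\infty$, which is exactly \eqref{s:an:upper}. For the first one, I need
\[
\frac{n q^2(\ssmin)^2}{\suit{q\sfmax+q^2(\ssmax)^2}\log^5 d}\to\infty .
\]
The piece $nq(\ssmin)^2/(\sfmax\log^5 d)\to\infty$ follows from the first part of Condition \ref{assum:joint:sequence:lyaponov}. The piece $n(\ssmin/\ssmax)^2/\log^5 d\to\infty$ follows from the second part, namely $\ssmax/\ssmin=o\suit{(n/\log^5(pn))^{1/2}}$, together with $\log d=O(\log p)$.

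I expect the main obstacle to be the bookkeeping of the log powers: the bias step and the Bernstein exponents must be checked carefully against the exact exponents in Conditions \ref{assum:joint:sequence:lyaponov} and \ref{assum:joint:sequence:highdim}. The bias step deserves extra care, since it needs the truncation probability to be small for each fixed $i$, not just uniformly in the maximum; I would extract that directly from the exponential bound in the proof of Lemma \ref{lemma:upper:bound:Y}.
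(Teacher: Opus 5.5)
Your route is the same as the paper's. You reduce via $\sigma_{\ell\ell}\gtrsim q\ssmin$, replace $W$ by its truncation at $b_n$ using Lemma~\ref{lemma:upper:bound:Y}, apply Bernstein with a union bound over the $d^2$ pairs, and bound the truncation bias by Cauchy--Schwarz, Lemma~\ref{lemma:fourth:moment} and \eqref{s:an:mean:dif}. Your truncation and Bernstein steps are correct, and your two-sided truncation is the right one (the paper's $\mI(Z_i^{(\ell)}\le b_n)$ needs an absolute value).

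The gap is in the bias step, and it is not the fixed-$i$ issue you flag. For fixed $i$ the event is contained in the one controlled by Lemma~\ref{lemma:upper:bound:Y}, so that part is automatic. The real problem is the rate. Using only $\Pr(\abs{W_i^{(\ell)}}>b_n)=o(1/a_n)$ with $a_n=(nq\ssmin)^2$, the ratio of your bias bound to the target $q\ssmin/\log^2 d$ is
\[
o(1)\,\frac{\suit{q\sfmax+q^2(\ssmax)^2}^{1/2}\log^2 d}{nq^2(\ssmin)^2}.
\]
The $q^2(\ssmax)^2$ piece equals $o(1)\frac{\ssmax}{\ssmin}\cdot\frac{\log^2 d}{nq\ssmin}\to0$. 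This, not $q\ssmax\log^2 d/(nq\ssmin)$, is the quantity to check, though your conclusion for this piece is right. The $q\sfmax$ piece is not ``handled the same way.'' Conditions~\ref{assum:joint:sequence:lyaponov} and~\ref{assum:joint:sequence:highdim} give $nq\ssmin\gg\log^9(pn)$, but they do not give $\sqrt n\,q\ssmin\gtrsim 1$. Take the validation-backtest regime $\ssmin\asymp\ssmax\asymp\sfmax\asymp 1/n$, $p=n$, $d=2p$, $q\asymp\log^{10}n$, which satisfies all the conditions. There this piece is $o(1)\cdot\sqrt n\log^2 n/q^{3/2}\asymp o(1)\cdot n^{1/2}/\log^{13}n$, and an unquantified $o(1)$ cannot beat that.

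The fix is to use the exponential tail itself rather than $o(1/a_n)$. The proof of Lemma~\ref{lemma:upper:bound:Y} shows that the Merlev\`ede--Peligrad--Rio exponent $b_n^2/(qv_\ell^2+C_2+C_3b_n\log^2 q)$, divided by $\log(a_nnd)$, diverges. Hence $\Pr(\abs{W_i^{(\ell)}}\ge b_n)\le (a_nnd)^{-2}$ uniformly in $i,\ell$ for large $n$. The bias is then at most $O\suit{\sqrt{q\sfmax}+q\ssmax}(a_nnd)^{-1}$. By Condition~\ref{assum:joint:sequence:lyaponov}, $\suit{\sqrt{q\sfmax}+q\ssmax}/(q\ssmin)=o(\sqrt n+\log^2 p)$. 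So the ratio of the bias to $q\ssmin/\log^2 d$ is $o(\sqrt n+\log^2 p)\log^2 d/(a_nnd)$. This tends to zero because $\log^2 d\le d$ and $\log^5 p=o(n)$.

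For comparison, the paper's proof of this step concludes only that the bias is $o(1)$. That is the wrong scale when $q\ssmin\to0$. You correctly identified that the bias must be measured against $q\ssmin/\log^2 d$, and the same repair closes the paper's argument.
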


\begin{proof}[Proof of Lemma \ref{lemma:var:joint}]

By Condition \ref{assum:joint:var:low:bound}, we have that,   as $n\to\infty$,
\begin{equation}\label{s:low:bound:var}
\sigma_{\ell\ell}  =\frac{1}{n}\sum_{i=1}^n \Var(Y_i^{(\ell)})  \gtrsim   q\ssmin,
\end{equation}
uniformly for all $1\le \ell \le d$. 
Thus, it suffices to show that, as $n\to\infty$,
\begin{align}
&\max_{1\le \ell_1, \ell_2\le d}\abs{\frac{1}{n}\sum_{i=1}^n \suit{Y_i^{(\ell_1)} -\bE Y_i^{(\ell_1)} }\suit{Y_i^{(\ell_2)} - \bE Y_i^{(\ell_2)} }-\sigma_{\ell_1\ell_2} } =  o_P(1) q\ssmin/\log^2d. \label{s:eq:con:var}
\end{align}
We intend to complete the proof of Lemma \ref{lemma:var:joint}  by showing that, as $n\to\infty$, 
\begin{align}
\max_{1\le \ell_1, \ell_2\le d}\abs{\frac{1}{n}\sum_{i=1}^n Z_i^{(\ell_1)}Z_i^{(\ell_2)}-\frac{1}{n}\sum_{i=1}^n \overline{Z}_i^{(\ell_1)}\overline{Z}_i^{(\ell_2)} } = & o_P(1) q\ssmin/\log^2d \label{s:var:joint1}, \\
\max_{1\le \ell_1, \ell_2\le d}\abs{ \frac{1}{n}\sum_{i=1}^n \overline{Z}_i^{(\ell_1)}\overline{Z}_i^{(\ell_2)} -\bE \suit{ \overline{Z}_i^{(\ell_1)}\overline{Z}_i^{(\ell_2)}}  } = & o_P(1) q\ssmin/\log^2d \label{s:var:joint3}, \\
\max_{1\le \ell_1, \ell_2\le d}\abs{\bE \suit{ Z_i^{(\ell_1)} Z_i^{(\ell_2)}} -\bE \suit{ \overline{Z}_i^{(\ell_1)}\overline{Z}_i^{(\ell_2)}}   } = & o_P(1) q\ssmin/\log^2d \label{s:var:joint2}, 
\end{align}
where 
$$
\begin{aligned}
	Z_i^{(\ell)} =& Y_i^{(\ell)}-\bE Y_i^{(\ell_1)}, \\
	\overline{Z}_i^{(\ell)} = & 	Z_i^{(\ell)}\mI\suit{	Z_i^{(\ell)} \le  b_n },
\end{aligned}
$$
and $b_n$ is defined in \eqref{s:def:an:bn}. 
	
First, we prove 	\eqref{s:var:joint1}.  By Lemma \ref{lemma:upper:bound:Y}, we have that, for any $\varepsilon>0$, 
$$
\begin{aligned}
	&\Pr\suit{\max_{1\le \ell_1, \ell_2\le d}\abs{\frac{1}{n}\sum_{i=1}^n Z_i^{(\ell_1)}Z_i^{(\ell_2)}-\frac{1}{n}\sum_{i=1}^n \overline{Z}_i^{(\ell_1)}\overline{Z}_i^{(\ell_2)} }> \varepsilon  qs_{2,n}/\log^2 d  } \\
\le &	\Pr\suit{ \max_{1\le i\le n} \max_{1\le \ell \le d}\abs{Y_i^{(\ell)}-\bE Y_i^{(\ell)} }\ge b_n }\to 0. 
\end{aligned}
$$
Thus, \eqref{s:var:joint1} holds. 

Next, we prove \eqref{s:var:joint3}.
By the Bernstein inequality (page 855 of \cite{shorack1986empirical}),
 we have that, for any $\varepsilon>0$,
$$
\begin{aligned}
& \Pr\suit{\max_{1\le \ell_1, \ell_2\le d}\abs{ \frac{1}{n}\sum_{i=1}^n \overline{Z}_i^{(\ell_1)}\overline{Z}_i^{(\ell_2)} -\bE \suit{ \overline{Z}_i^{(\ell_1)}\overline{Z}_i^{(\ell_2)}} }   \ge \varepsilon q  \ssmin /\log^2d}	\\
\le & d^2 \max_{1\le \ell_1, \ell_2\le d} \Pr\suit{ \abs{ \frac{1}{n}\sum_{i=1}^n \overline{Z}_i^{(\ell_1)}\overline{Z}_i^{(\ell_2)} -\bE \suit{ \overline{Z}_i^{(\ell_1)}\overline{Z}_i^{(\ell_2)}} }   \ge \varepsilon  q \ssmin/\log^2d} \\
\le & d^2 \max_{1\le \ell_1, \ell_2\le d} \exp\suit{-C_1\varepsilon^2 \frac{nq^2\suit{\ssmin}^2 /\log^4 d }{C_2\frac{1}{n}\sum_{i=1}^n s_{i,\ell_1\ell_2}^2 +C_3 \varepsilon  b_n^2 q \ssmin /\log^2 d   }   },
\end{aligned}
$$
where 
$$
s_{i,\ell_1\ell_2}^2= \bE \suit{  \overline{Z}_i^{(\ell_1)}\overline{Z}_i^{(\ell_2)} }^2.
$$
By using the  Cauchy-Schwarz inequality and Lemma \ref{lemma:fourth:moment}, we have that, as $n\to\infty$, 
$$
s_{i,\ell_1\ell_2}^2 \le  \bE \suit{  Z_i^{(\ell_1)}Z_i^{(\ell_2)} }^2 \le  \sqrt{  \bE \suit{ Z_i^{(\ell_1)}}^4 \bE \suit{Z_i^{(\ell_2)} }^4} =O(1)\suit{q\sfmax +q^2\suit{\ssmax}^2 },
$$
uniformly for all $1\le \ell_1\le \ell_2\le d, 1\le i\le n$.  By Condition \ref{assum:joint:sequence:lyaponov}  and \eqref{s:an:upper}, 
  we have that, as $n\to\infty$,
$$
\frac{nq^2\suit{\ssmin}^2 }{\log^4 d  \suit{q\sfmax +q^2\suit{\ssmax}^2 }}\frac{1}{\log d}\to \infty, \quad \frac{nq\ssmin/\log^2 d }{  b_n^2 } \frac{1}{\log d} \to\infty. 
$$
It follows that,  as $n\to\infty$,
$$
 \Pr\suit{\max_{1\le \ell_1, \ell_2\le d}\abs{\frac{1}{n}\sum_{i=1}^n (Y_i^{(\ell_1)}- \bE Y_i^{(\ell_1)}  )(Y_i^{(\ell_2)}-\bE Y_i^{(\ell_2)})-\sigma_{\ell_1\ell_2} } \ge \varepsilon  q\ssmin/\log^2d}\to 0.	
 $$
 Then, \eqref{s:var:joint3} holds.

Finally, we prove \eqref{s:var:joint2}. Note that, 
\begin{align*}
	&\abs{\bE \suit{ Z_i^{(\ell_1)} Z_i^{(\ell_2)}} -\bE \suit{ \overline{Z}_i^{(\ell_1)}\overline{Z}_i^{(\ell_2)}}   } \\
	\le &  \abs{\bE \set{ Z_i^{(\ell_1)} Z_i^{(\ell_2)}\mI\suit{Z_i^{(\ell_1)} > b_n }} } 
	+\abs{\bE \set{ Z_i^{(\ell_1)} Z_i^{(\ell_2)}\mI\suit{Z_i^{(\ell_2)} >  b_n } }}.
\end{align*}
By the generalized H\"older inequality, Lemma \ref{lemma:upper:bound:Y}, Lemma \ref{lemma:fourth:moment},  and   \eqref{s:an:mean:dif}, we have that, as $n\to\infty$, 
$$ 
\begin{aligned}
	&\abs{\bE \set{ Z_i^{(\ell_1)} Z_i^{(\ell_2)}\mI\suit{Z_i^{(\ell_1)} >  b_n} } } \\
	\le & \set{\bE \suit{ Z_i^{(\ell_1)}}^4}^{1/4} \set{\bE \suit{ Z_i^{(\ell_2)}}^4}^{1/4} \set{\Pr\suit{{Z_i^{(\ell_1)} >   b_n } }  }^{1/2}  \\
	= &o(1)  \suit{q \sfmax +q^2 \suit{\ssmax}^2 }^{1/2} \suit{\frac{1}{ a_n}}^{1/2}\\
		=& o(1), 
\end{aligned}
$$
uniformly for all $1\le \ell_1\le \ell_2\le d$. 
Similarly, we have that, 
as $n\to\infty$,
$$
\abs{\bE \set{ Z_i^{(\ell_1)} Z_i^{(\ell_2)}\mI\suit{Z_i^{(\ell_2)} >  b_n}} } = o(1),
$$
uniformly for all $1\le \ell_1\le \ell_2\le d$. 
Thus, \eqref{s:var:joint2} holds.

 The proof of Lemma \ref{lemma:var:joint} is then complete. 
\end{proof}

\begin{lemma}\label{lemma:convergence:var}
Assume the same conditions as in Theorem \ref{theorem:size:improved}. Then, as $n\to\infty$, 
	$$
	\max_{1\le \ell\le d}\abs{\frac{\widehat{\sigma}_{\ell\ell}}{\sigma_{\ell\ell}}-1}= o_P(1/\log^2 d).
	$$
\end{lemma}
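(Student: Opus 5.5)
Write $Z_i^{(\ell)} = Y_i^{(\ell)} - \bE Y_i^{(\ell)}$ and decompose $\widehat{\sigma}_{\ell\ell}$ exactly as in \eqref{s:eq:rewrite:sigma00}:
$$
\widehat{\sigma}_{\ell\ell} = \widetilde{\sigma}_{\ell\ell} - \suit{\frac{1}{n}\sum_{i=1}^n Z_i^{(\ell)}}^2 .
$$
This identity holds because the common centering constant $\bE Y_i^{(\ell)}$ cancels out in the empirical variance. No use of $H_0$ is needed, since the mean of $Y_i^{(\ell)}$ is the same for all $i$. It therefore suffices to show two things, uniformly in $\ell$:
$$
\max_{1\le \ell\le d}\abs{\frac{\widetilde{\sigma}_{\ell\ell}-\sigma_{\ell\ell}}{\sigma_{\ell\ell}}} = o_P(1/\log^2 d)
\quad\text{and}\quad
\max_{1\le \ell\le d}\frac{1}{\sigma_{\ell\ell}}\suit{\frac{1}{n}\sum_{i=1}^n Z_i^{(\ell)}}^2 = o_P(1/\log^2 d).
$$
The first is immediate from Lemma \ref{lemma:var:joint}, taking $\ell_1=\ell_2=\ell$.

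For the second term, the plan is to show
$$
\max_{\ell}\abs{\sum_{i=1}^n Z_i^{(\ell)}} = O_P\suit{\sqrt{n q \ssmax \log d}} + \text{(lower order)}.
$$
Squaring and dividing by $n^2\sigma_{\ell\ell}\gtrsim n^2 q\ssmin$ (by \eqref{s:low:bound:var}) then gives a bound of order $(\ssmax/\ssmin)\log d/n$.

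To obtain this maximal bound, I would repeat the truncation argument of Lemma \ref{lemma:var:joint}:
\begin{itemize}
\item Replace $Z_i^{(\ell)}$ by its truncation at level $b_n$. On the event of Lemma \ref{lemma:upper:bound:Y}, whose complement has vanishing probability, nothing changes.
\item The recentering bias is negligible. By H\"older's inequality, Lemma \ref{lemma:fourth:moment} and \eqref{s:an:mean:dif}, $\abs{\bE Z_i^{(\ell)}\mI(\abs{Z_i^{(\ell)}}>b_n)} = o(a_n^{-3/4}(q\sfmax+q^2(\ssmax)^2)^{1/4})$. Summed over $n$ terms, this is tiny relative to $\sqrt{nq\ssmin}/\log d$ because $a_n=(nq\ssmin)^2$.
\item Apply Bernstein's inequality across independent $i$ with a union bound over $d$. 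The variance proxy is $n\,\Var(Y_i^{(\ell)})\lesssim nq\ssmax$, using \eqref{s:eq:bound:v2}, and the range is $b_n$. This gives
$$
\max_\ell\abs{\sum_i \overline{Z}_i^{(\ell)} - \bE\overline{Z}_i^{(\ell)}} = O_P\suit{\sqrt{nq\ssmax\log d} + b_n\log d}.
$$
\end{itemize}

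The required rate then reduces to two conditions. The first is $(\ssmax/\ssmin)\log^3 d/n\to 0$, which follows from Condition \ref{assum:joint:sequence:lyaponov}, since $\ssmax/\ssmin=o((n/\log^5(pn))^{1/2})$, together with $\log d=O(\log p)$ and Condition \ref{assum:joint:sequence:highdim}. The second is $b_n^2\log^4 d/(n^2 q\ssmin)\to 0$, which is implied by \eqref{s:an:upper} since $n\ge 1$ and the extra $\log d/n$ factor only helps.

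The main obstacle is bookkeeping the rate $1/\log^2 d$, rather than mere consistency, in the squared-mean term, and checking that the combination of Conditions \ref{assum:joint:sequence:lyaponov} and \ref{assum:joint:sequence:highdim} indeed delivers $(\ssmax/\ssmin)\log^3 d = o(n)$. If the Bernstein route were tight, a cruder alternative is Chebyshev's inequality with a union bound: $\bE(\sum_i Z_i^{(\ell)})^2 = n^2\sigma_{\ell\ell}/n$ gives probability $d\log^2 d/n$ per deviation. That route is too weak when $d$ is polynomial in $p$, so the exponential inequality is the one to use.
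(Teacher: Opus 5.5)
\textbf{The gap is the truncation-bias step.} The rest of your argument is sound. This includes the identity $\widehat{\sigma}_{\ell\ell}=\widetilde{\sigma}_{\ell\ell}-(n^{-1}\sum_i Z_i^{(\ell)})^2$; your sign is right, and the paper's display has a $+$, which is immaterial. It also includes Lemma~\ref{lemma:var:joint} for the first term, Bernstein plus a union bound for the centred truncated sums, and the two rate conditions you derive.

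The problem is the bias. Lemma~\ref{lemma:upper:bound:Y} only gives $\Pr(\abs{Z_i^{(\ell)}}>b_n)=o(1/a_n)$ with $a_n=(nq\ssmin)^2$. So your H\"older bound, combined with \eqref{s:an:mean:dif}, sums over $i$ to $n\cdot o(a_n^{-1/2})=o\{1/(q\ssmin)\}$. What you need is $o(n\sqrt{q\ssmin}/\log d)$; the threshold $\sqrt{nq\ssmin}/\log d$ you wrote is even smaller. This amounts to $\sqrt{n}\log d/(nq\ssmin)^{3/2}\to 0$.

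Condition~\ref{assum:joint:sequence:highdim} only forces $nq\ssmin/\log^{9}(pn)\to\infty$, so the requirement fails in exactly the shrinking regime the paper targets. Take $\ssmin\asymp\ssmax\asymp\sfmax\asymp 1/n$ (validation backtest with $\theta_0=c/n$), $q\asymp\log^{10}(pn)$, $p=n^2$ and $d=2p$. All conditions hold, yet $a_n\asymp\log^{20}(pn)$ is merely polylogarithmic. Even the unsimplified H\"older bound only gives $n\,(q\sfmax+q^2(\ssmax)^2)^{1/4}\,o(a_n^{-3/4})=o\{n^{3/4}\log^{-12.5}(pn)\}$. That is of larger order than the target $\sqrt{n}\log^{5}(pn)/\log d$, by a factor $n^{1/4}$ up to logarithms. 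So ``tiny because $a_n=(nq\ssmin)^2$'' is unjustified: $a_n$ need not be polynomially large in $n$.

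\textbf{Repair, and comparison with the paper.} The true bias is negligible; you only need a tail bound that beats the factor $nq$. Lemma~\ref{lemma:upper:bound:Y} holds for any pair $(a_n,b_n)$ satisfying \eqref{s:an:bn:lemma:bound:Y}, so take for example $a_n=(npd)^3$. Since $\log(a_n nd)\asymp\log(pn)$, conditions \eqref{s:an:bn:lemma:bound:Y}, \eqref{s:an:mean:dif} and \eqref{s:an:upper} remain jointly satisfiable under Conditions~\ref{assum:joint:sequence:lyaponov} and~\ref{assum:joint:sequence:highdim}. The crude bound $\abs{Z_i^{(\ell)}}\le 2Cq$ then gives a summed bias of at most $2Cnq\cdot o(1/a_n)=o(1)$, whereas $n\sqrt{q\ssmin}/\log d\to\infty$. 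With this fix your route is valid.

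The paper avoids truncation entirely. It applies Theorem 2 of \cite{merlevede2009bernstein} directly to $\sum_{i=1}^n\sum_{j\in\mS_\ell}(X_{i,j}-\mu_j)$, viewed as $nq$ summands bounded by $2C$ that form an $\alpha$-mixing array, with $x=\sqrt{q\ssmin/n}\,\log^{3/2}d$, so no bias term ever appears. Its intermediate bound is stronger than needed and uses $\ssmax/\ssmin=o(\log^2 d)$. Your classical Bernstein over independent $i$ needs only $(\ssmax/\ssmin)\log^3 d/n\to 0$ for the stochastic part.
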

\begin{proof}[Proof of Lemma \ref{lemma:convergence:var}]
Write 
$$
\begin{aligned}
\widehat{\sigma}_{\ell\ell} -\sigma_{\ell\ell}=&   \widehat{\sigma}_{\ell\ell}- \widetilde{\sigma}_{\ell\ell}+ \widetilde{\sigma}_{\ell\ell}-\sigma_{\ell\ell} \\
=&  	\frac{1}{n}\sum_{i=1}^n (Y_i^{(\ell)}-\frac{1}{n}\sum_{i=1}^n Y_i^{(\ell)})^2 - \frac{1}{n}\sum_{i=1}^n \suit{Y_i^{(\ell)} -\bE Y_i^{(\ell)} }^2+ \widetilde{\sigma}_{\ell\ell}-\sigma_{\ell\ell} \\
=&\suit{\frac{1}{n}\sum_{i=1}^n Y_i^{(\ell)}  - \bE Y_i^{(\ell)}}^2+\widetilde{\sigma}_{\ell\ell}-\sigma_{\ell\ell} \\
=&  \suit{\frac{1}{n}\sum_{i=1}^n  Z_i^{(\ell)}}^2+\widetilde{\sigma}_{\ell\ell}-\sigma_{\ell\ell}.
\end{aligned}
$$
	By Lemma \ref{lemma:var:joint} and \eqref{s:low:bound:var}, it suffices to show that, as $n\to\infty$, 
\begin{equation}\label{s:con:var:1}
	\suit{\frac{1}{n}\sum_{i=1}^n  Z_i^{(\ell)}}^2 =  o_P(1) q \ssmin/\log^2d. 
\end{equation}
We intend to prove a stronger result than  \eqref{s:con:var:1}:
\begin{equation}\label{s:con:var:2}
	\max_{1\le \ell \le d} \abs{ \frac{1}{n}\sum_{i=1}^n  Z_i^{(\ell)}} =o_P(1) \sqrt{\frac{q\ssmin}{n}}\log^{3/2} d.
\end{equation}
	By Condition \ref{assum:mixing} and  Theorem 2 of \cite{merlevede2009bernstein}, we have that,   there exist  constants $C_1>0, C_2>0, C_3>0$, such that,
	for any $x>0$,
	$$
	\begin{aligned}
	 \Pr\suit{   \max_{1\le \ell \le d}\abs{\frac{1}{n}\sum_{i=1}^n  Z_i^{(\ell)}} \ge x } 
		\le & d \max_{1\le \ell \le d} \Pr\suit{\abs{\sum_{i=1}^n  Z_i^{(\ell)}} \ge nx } \\
		=& d\max_{1\le \ell \le d} \Pr\suit{ \abs{\sum_{i=1}^n \sum_{j\in \mS_{\ell} }  \suit{X_{i,j}- \mu_{j}  } }\ge nx}\\ 
		\le & d\max_{1\le \ell \le d} \exp\suit{-C_1 \frac{ n^2x^2}{nq v_{\ell}^2 +C_2+ C_3nx\log^2(nq)} },
	\end{aligned}
	$$
where 
$v_{\ell}^2 $ is defined in \eqref{s:def:v2}. 
Take 
$$
x = 	\sqrt{\frac{q\ssmin }{n}}\log^{3/2} d.
$$ 
By \eqref{s:eq:bound:v2},   Conditions \ref{assum:joint:sequence:lyaponov} and \ref{assum:joint:sequence:highdim}, we have that, 
 as $n\to\infty$,
$$
\begin{aligned}
	&\frac{n^2x^2}{nqv_{\ell}^2 }\frac{1}{\log d}  =\frac{\ssmin}{\ssmax}\log^2d  \to \infty,  \\ 
	 &\frac{n^2x^2}{\log d} = nq\ssmin \log^2d   \to \infty,  \\
&\frac{n^2x^2}{nx\log^2 (nq)} \frac{1}{\log d}  = \sqrt{nq\ssmin} \frac{\log^{1/2} d}{\log^2(nq)}   \to\infty.
\end{aligned}
$$
Thus, we have that,  as $n\to\infty$,
$$
\Pr\suit{   \max_{1\le \ell \le d}\abs{\frac{1}{n}\sum_{i=1}^n  Z_i^{(\ell)}} \ge \sqrt{\frac{q\ssmin }{n}}\log^{3/2} d }\to 0. 
$$
It follows that, \eqref{s:con:var:2} holds. 
The proof is then complete.
\end{proof}

\begin{proof}[Proof of Theorem \ref{theorem:size:improved}]
Under $H_0$, we have that 
$$
T^{(\ell)} = \sqrt{\frac{\sigma_{\ell\ell}}{\widehat{\sigma}_{\ell\ell}} }  \frac{1}{\sqrt{n}}\sum_{i=1}^n \frac{ Y_i^{(\ell)}}{\sqrt{\sigma_{\ell\ell}}} = \sqrt{\frac{\sigma_{\ell\ell}}{\widehat{\sigma}_{\ell\ell}} }  \frac{1}{\sqrt{n}}\sum_{i=1}^n \frac{ Z_i^{(\ell)}}{\sqrt{\sigma_{\ell\ell}}}.
$$
By  \eqref{s:low:bound:var}, \eqref{s:con:var:2}   and Lemma \ref{lemma:convergence:var},    we have that, as $n\to\infty$, 
\begin{equation}\label{s:eq:to:iid}
\begin{aligned}
		T^{(\ell)} =& \frac{1}{\sqrt{n}}\sum_{i=1}^n \frac{ Z_i^{(\ell)}}{\sqrt{\sigma_{\ell\ell}}} + o(1)\frac{1}{\log^2d } \frac{\sqrt{n}}{\sqrt{\sigma_{\ell\ell}}} \frac{1}{n}\sum_{i=1}^n Z_i^{(\ell)} \\
	    =& \frac{1}{\sqrt{n}}\sum_{i=1}^n \frac{ Z_i^{(\ell)}}{\sqrt{\sigma_{\ell\ell}}} + o(1)\frac{1}{\log^2d } \frac{\sqrt{n}}{\sqrt{q\ssmin}} \sqrt{\frac{q \ssmin }{n}}\log^{3/2} d \\
	    =&\frac{1}{\sqrt{n}}\sum_{i=1}^n \frac{ Z_i^{(\ell)}}{\sqrt{\sigma_{\ell\ell}}}  +o_P(1)\log^{-1/2} d,
\end{aligned}  
\end{equation}
uniformly for all $1\le \ell \le d$.

We intend to apply High dimensional central limit theorem   to approximate the distribution of 
$$
 \suit{\frac{1}{\sqrt{n}}\sum_{i=1}^n \frac{Z_i^{(1)} }{\sqrt{\sigma_{11}}}, \dots,  \frac{1}{\sqrt{n}}\sum_{i=1}^n \frac{ Z_i^{(d)} }{\sqrt{\sigma_{dd}}}}.
$$
For this purpose, we verify the conditions in Theorem 1 in \cite{chernozhukov2023high}. 
Let $B_n$ be a real sequence satisfying 
\begin{align}
	\frac{1}{B_n^2}\set{ \suit{\frac{\ssmax}{\ssmin}}^2 +\frac{\sfmax }{q\suit{\ssmin}^2 }+ \frac{\log^4 q}{q\ssmin } }   \to &0  ,   \label{s:Bn:lower} \\
	 B_n^2 \frac{ \log^5(dn)}{n}\to &0. \label{s:Bn:upper}
\end{align}
as $n\to\infty$. By Condition \ref{assum:joint:sequence:lyaponov} and Condition \ref{assum:joint:sequence:highdim}, we have that, 
$$
\set{ \suit{\frac{\ssmax}{\ssmin}}^2 +\frac{\sfmax }{q\suit{\ssmin}^2 }+ \frac{\log^4 q}{q\ssmin } }    \frac{ \log^5(dn)}{n} \to 0. 
$$
Therefore a sequence $B_n$ satisfying \eqref{s:Bn:lower} and \eqref{s:Bn:upper} exists. 
One convenient choice is    
$$
B_n^2 = \sqrt{  \set{ \suit{\frac{\ssmax}{\ssmin}}^2 +\frac{\sfmax }{q\suit{\ssmin}^2 }+ \frac{\log^4 q}{q\ssmin }   }/ \suit{ \frac{ \log^5(dn)}{n}   } }.
$$

 By \eqref{s:low:bound:var},  Lemma \ref{lemma:fourth:moment} and Condition \eqref{s:Bn:lower}, we have that, as $n\to\infty$,
$$
\begin{aligned}
		\bE \suit{  \frac{Z_i^{(\ell)}}{\sqrt{\sigma_{\ell\ell}}}}^2 =&1, \\
		\max_{1\le \ell \le d}	\bE \suit{\frac{ Z_i^{(\ell)}}{\sqrt{\sigma_{\ell\ell}}}}^4  =&O(1)\set{\suit{\frac{\ssmax}{\ssmin}}^2 +\frac{\sfmax }{q \suit{\ssmin}^2 } } =o(1) B_n^2. 
\end{aligned}
$$

Define 
$$
t  = 2\frac{1}{\sqrt{\sigma_{\ell\ell}} B_n}.
$$
By \eqref{s:low:bound:var} and \eqref{s:Bn:lower}, we have that, as $n\to\infty$,
$$
t  = o(1) \frac{1}{\log^2 q}.  
$$

 By Condition \ref{assum:mixing} and Theorem 2 of \cite{merlevede2009bernstein}, we have that, for some $C_1>0, C_2>0$,
$$
\begin{aligned}
\log \bE \exp  \suit{  2\frac{Z_i^{(\ell)}}{\sqrt{\sigma_{\ell\ell}}B_n }} \le \frac{ C_1t^2(qv_{\ell}^2+1)}{1-C_2t\log^2 q}, 
\end{aligned}
$$
where $v_{\ell}^2$ is defined in \eqref{s:def:v2}.  By \eqref{s:eq:bound:v2} and \eqref{s:Bn:lower}, we have that, as $n\to\infty$, 
$$
\frac{ C_1t^2(qv_{\ell}^2+1)}{1-C_2t\log^2 q} = o(1). 
$$ 
Thus, we have that, for sufficiently large $n$,
$$
 \bE \exp  \suit{  2\frac{ Z_i^{(\ell)}}{\sqrt{\sigma_{\ell\ell}}B_n }}  \le \sqrt{2}.
$$
Similarly, we have that, for sufficiently large $n$,
$$
  \bE \exp  \suit{-  2\frac{Z_i^{(\ell)}}{\sqrt{\sigma_{\ell\ell}}B_n }}  \le \sqrt{2}.
$$
By using the  Cauchy-Schwarz inequality, we have that, for sufficiently large $n$, 
$$
\begin{aligned}
  \bE \exp  \suit{  \frac{\abs{  Z_i^{(\ell)} }}{\sqrt{\sigma_{\ell\ell}}B_n }} =&  \bE \exp  \suit{  \frac{ Z_i^{(\ell)} }{\sqrt{\sigma_{\ell\ell}}B_n }} I\suit{Z_i^{(\ell)}>0}  +  \bE \exp  \suit{  \frac{ -Z_i^{(\ell)} }{\sqrt{\sigma_{\ell\ell}}B_n }} I\suit{Z_i^{(\ell)}\le 0}\\
  \le & \sqrt{2} \suit{ \sqrt{\Pr\suit{Z_i^{(\ell)}>0}} +\sqrt{\Pr\suit{Z_i^{(\ell)}\le 0}} } \\
  \le & \sqrt{2} \sqrt{2} =2.
\end{aligned}
$$

By Theorem 1 of \cite{chernozhukov2023high} and Condition \eqref{s:Bn:upper}, we have that, 
$$
\begin{aligned}
	&\abs{\Pr\set{ \suit{\frac{1}{\sqrt{n}}\sum_{i=1}^n \frac{Z_i^{(1)} }{\sqrt{\sigma_{11}}}, \dots,  \frac{1}{\sqrt{n}}\sum_{i=1}^n \frac{ Z_i^{(d)} }{\sqrt{\sigma_{dd}}}}\in A} - \Pr\set{N(0,\bolSigma)\in A }}  \\
	=& O(1) \suit{\frac{B_n^2\log^5(dn)}{n }}^{1/4} =o(1), 
\end{aligned}
$$
uniformly for all $A\in \mathcal{A}$, where $\mathcal{A}$ is the collection of closed rectangles in $\mathbb{R}^p$,
$$
\mathcal{A} = \set{\prod_{j=1}^d [a_j,b_j]:-\infty\le a_j\le b_j\le \infty, j=1,\dots,d},
$$
and $\bolSigma$ is a $d\times d$ matrix with elements
$$
\bolSigma_{\ell_1,\ell_2} = \frac{1}{n}\sum_{i=1}^n  \frac{Z_i^{(\ell_1} Z_i^{\ell_2}}{\sqrt{\sigma_{\ell_1\ell_1} \sigma_{\ell_2\ell_2} }}.
$$

By Lemma 1 of \cite{chernozhukov2023high} and \eqref{s:eq:to:iid}, we have that,
\begin{equation}\label{s:eq:norm:T}
	\abs{\Pr\set{ \suit{T^{(1)}, \dots, T^{(\ell)} }\in A} - \Pr\set{N(0,\bolSigma)\in A }}  
 =o(1). 
\end{equation}

Next, we handle the bootstrapped versions.
Given the  data $(X_{i,1},\dots, X_{i,p}), i=1,\dots,n$, the bootstrapped versions have the following conditional distribution:
$$
\suit{T_B^{(1)}, \dots, T_B^{(d)}}\sim N(0,\widehat{\bolSigma}),
$$
where 
$$
\widehat{\bolSigma}_{\ell_1\ell_2} =\frac{1}{\sqrt{\widehat{\sigma}_{\ell_1\ell_1}\widehat{\sigma}_{\ell_2\ell_2}}}  \frac{1}{n}\sum_{i=1}^n Y_i^{(\ell_1)}Y_i^{(\ell_2)}, \quad 1\le \ell_1\le \ell_2 \le d.
$$
By Lemma \ref{lemma:var:joint} and Lemma \ref{lemma:convergence:var}, we have that, under $H_0$,  as $n\to\infty$, 
$$
\abs{\widehat{\bolSigma}_{\ell_1\ell_2}- \bolSigma_{\ell_1\ell_2}} =o_P(1/\log^2 d),
$$
uniformly for all $1\le \ell_1,\ell_2\le d$. 
Thus, by Proposition 2.1 of \cite{chernozhukov2022improved}, we have that,  as $n\to\infty$,
$$
\Pr\suit{N(0,\bolSigma)\in A } - \Pr\suit{N(0,\widehat{\bolSigma})\in A } =o(1).
$$
Combining with \eqref{s:eq:norm:T}, we have that, as $n\to\infty$,
 $$
	\abs{\Pr\suit{ \suit{T^{(1)}, \dots, T^{(\ell)} }\in A} - \Pr\suit{N(0,\widehat{\bolSigma})\in A }}  
 =o(1). 
$$
The proof is then complete.
 \end{proof}

 \subsection{Proof of Theorem \ref{theorem:power:improved}}
 
 \begin{proof}[Proof of Theorem \ref{theorem:power:improved}]
 	Recall that, given the  data $(X_{i,1},\dots, X_{i,p}), i=1,\dots,n$,  
$$
\suit{T_B^{(1)}, \dots, T_B^{(d)}}\sim N(0,\widehat{\bolSigma}),
$$
where 
$$
\widehat{\bolSigma}_{\ell_1\ell_2} =\frac{1}{\sqrt{\widehat{\sigma}_{\ell_1\ell_1}\widehat{\sigma}_{\ell_2\ell_2}}}  \frac{1}{n}\sum_{i=1}^n  Y_i^{(\ell_1)} Y_i^{(\ell_2)}   , \quad 1\le \ell_1\le \ell_2 \le d.
$$

By using the standard result on Gaussian maximum (see e.g. (B.31) of \cite{chang2017simulation}), we have that, for sufficiently large $p$,
\begin{equation}\label{s:upper:bound:c:alpha}
c_{\alpha} \le \msuit{\set{(1+(2\log d)^{-1}} \sqrt{2\log d} +\sqrt{2\log (1/\alpha)}} \max_{1\le \ell\le d} \suit{\widehat{\bolSigma}_{\ell\ell}}^{1/2}.
\end{equation}
 	
Write
$$
\begin{aligned}
	\widehat{\bolSigma}_{\ell\ell} = & \frac{1}{\widehat{\sigma}_{\ell\ell} } \frac{1}{n}\sum_{i=1}^n \suit{Y_i^{(\ell)}}^2   \\
	 =&  \frac{1}{\widehat{\sigma}_{\ell\ell} } \set{\frac{1}{n}\sum_{i=1}^n \suit{Y_i^{(\ell)} -\bE Y_i^{(\ell)} }^2 +\suit{\bE Y_i^{(\ell)}}^2 - 2\bE Y_i^{(\ell)}  \frac{1}{n}\sum_{i=1}^n \suit{Y_i^{(\ell)} -\bE Y_i^{(\ell)} }  }\\
	 =&\frac{\widetilde{\sigma}_{\ell\ell}}{\widehat{\sigma}_{\ell\ell}} -2 \frac{\delta_{\ell}}{\widehat{\sigma}_{\ell\ell}} \frac{1}{n}\sum_{i=1}^n Z_i^{(\ell)}   +\frac{\delta_{\ell}^2}{\widehat{\sigma}_{\ell\ell}},
\end{aligned}
$$
where 
$
\delta_{\ell} = \bE Y_i^{(\ell)} =\sum_{j\in \mS_{\ell}}\mu_j.
$

Denote $\delta = \max_{1\le \ell \le d} \abs{\delta_{\ell}}.$
By Lemma \ref{lemma:convergence:var}, \eqref{s:low:bound:var}, \eqref{s:con:var:2}, we have that, for any $\varepsilon>0$ and some $C>0$,
 with probability tending to $1$, 
$$
\max_{1\le \ell \le d} \widehat{\bolSigma}_{\ell\ell} \le \suit{1+\varepsilon}\suit{1+ \frac{ \varepsilon \delta }{ \sqrt{q\ssmin} } +\frac{C\delta^2}{q \ssmin }},
$$
and hence for sufficiently large $p$, 
$$
c_{\alpha}\le \sqrt{2(1+\varepsilon)\log d}\suit{1+ \frac{ \varepsilon \delta }{\sqrt{q\ssmin} } +\frac{C\delta^2}{q\ssmin}  }^{1/2}.
$$

Denote 
$$
\ell^* = \arg\max_{\ell} \abs{\delta_{\ell}}.
$$
Without loss of generality, assume that, $\delta_{\ell^*}>0$.
Write 
$$
\begin{aligned}
	T^{(\ell^*)} = & \frac{\sum_{i=1}^n Y_i^{(\ell^*)}    }{\sqrt{n \widehat{\sigma}_{\ell^*\ell^*} }}  \\ 
			 =& 	\frac{\sum_{i=1}^n (Y_i^{(\ell^*)} - \bE Y_i^{(\ell^*)}   ) }{\sqrt{n \widehat{\sigma}_{\ell^*\ell^*} }} +\frac{\sqrt{n}}{\sqrt{\widehat{\sigma}_{\ell^*\ell^*}}}\delta.
\end{aligned}
$$
 It follows that,
$$
\begin{aligned}
\Pr\suit{\max_{1\le \ell \le d} \abs{ T^{(\ell)} }> c_{\alpha} } 
\ge & \Pr\suit{ \abs{ T^{(\ell^*)} }> c_{\alpha} }  \\
=& 1- \Pr\suit{ \abs{ T^{(\ell^*)} }\le  c_{\alpha} } \\
\ge & 1- \Pr\suit{   \frac{\sum_{i=1}^n (Y_i^{(\ell^*)}-\bE Y_i^{(\ell^*)}  ) }{\sqrt{n \widehat{\sigma}_{\ell^*\ell^*} }} \le  -\suit{  \frac{\sqrt{n}}{\sqrt{\widehat{\sigma}_{\ell^*\ell^*}}}\delta -c_{\alpha}} } \\
\end{aligned}
$$

By Corollary 1.1 of \cite{shao1995maximal}, we have that, for some $K>0$, 
$$
\sigma_{\ell\ell} = \frac{1}{n}\sum_{i=1}^n \text{Var}(Y_i^{(\ell)}) \le K q\ssmax. 
$$
Thus, for sufficiently large $p$,  
with probability tending to 1, we have that,
$$
\begin{aligned}
	&\frac{\sqrt{n}}{\sqrt{\widehat{\sigma}_{\ell^*\ell^*}}}\delta - c_{\alpha} \\
	\ge &\frac{\sqrt{n}}{\sqrt{K q\ssmax  }}(1-\varepsilon) \delta - \sqrt{2(1+\varepsilon)\log d}\sqrt{ 1+ \frac{ \varepsilon \delta }{ \sqrt{q\ssmin} } +\frac{C\delta^2}{q\ssmin}}\\
	\ge &   \frac{\sqrt{n}}{\sqrt{q\ssmax}}\delta \set{ \frac{1-\varepsilon}{\sqrt{K}} - \sqrt{2(1+\varepsilon)\log d} \suit{\frac{q\ssmax }{n\delta^2}+\frac{\varepsilon \sqrt{q} \ssmax }{n\delta \sqrt{\ssmin}} +\frac{C\ssmax}{n\ssmin}}^{1/2}  }\\
	\ge  & \frac{\sqrt{n}}{\sqrt{q\ssmax}}\delta \set{  \frac{1-\varepsilon}{\sqrt{K}} -\sqrt{2(1+\varepsilon)\log d} \suit{\frac{1}{ \lambda\log d}+\frac{\varepsilon \sqrt{\ssmax}  }{\sqrt{n\lambda \log d\ssmin } } +\frac{C\ssmax }{n\ssmin} }^{1/2} } 
\end{aligned}
$$
By Conditions \ref{assum:joint:sequence:lyaponov} and  \ref{assum:joint:sequence:highdim}, and the assumptions that $\lambda \to \infty$,  we have that, as $n\to\infty$,
$$
\sqrt{2(1+\varepsilon)\log d} \suit{\frac{1}{ \lambda\log d}+\frac{\varepsilon \sqrt{\ssmax}  }{\sqrt{n\lambda \log d\ssmin } } +\frac{C\ssmax }{n\ssmin} }^{1/2} \to 0,
$$
and 
$$
\frac{\sqrt{n}}{\sqrt{q\ssmax}}\delta \to \infty.
$$
By taking $\varepsilon<1$, we have that, under $H_1$, with probability tending to 1, 
$$
\frac{\sqrt{n}}{\sqrt{\widehat{\sigma}_{\ell^*\ell^*}}}\delta - c_{\alpha} \to \infty.
$$

Moreover, note that, as $n\to\infty$,
$$
 \frac{\sum_{i=1}^n (Y_i^{(\ell^*)} - \bE Y_i^{(\ell^*)}    ) }{\sqrt{n \widehat{\sigma}_{\ell^*\ell^*} }} \to N(0,1).
$$
Thus, we have that, as $n\to\infty$,
$$
\Pr\suit{   \frac{\sum_{i=1}^n (Y_i^{(\ell^*)}-\bE Y_i^{(\ell^*)}  ) }{\sqrt{n \widehat{\sigma}_{\ell^*\ell^*} }} \le   -\suit{  \frac{\sqrt{n}}{\sqrt{\widehat{\sigma}_{\ell^*\ell^*}}}\delta -c_{\alpha}}  }\to 0,
$$
and hence
$$
\Pr\suit{\max_{1\le \ell \le d} \abs{ T^{(\ell)} }> c_{\alpha} }\to 1.
$$
 \end{proof}

\subsection{Proofs of Other results}

\begin{proof}[Proof of Lemma \ref{lemma:main:subsets}]
	We only need to prove sufficiency as the direction  of necessity is trivial. 
Define a mapping $T: \set{1,\dots,p} \to \set{1,\dots,p}$ by 
$$
T(j) = (j-1 \bmod p)+1.
$$	

Then for each $j\in \set{1,\dots, p}$, we can write
\begin{equation*}
	S_j := \mu_{j}+\mu_{T(j+1)}+\cdots+\mu_{T(j+q-1)}.
\end{equation*}
Under the null that $S_j=0$ for all $j=1,\dots,p$, we have that
\begin{equation*}
 0 = S_{j+1}-S_j = \mu_{T(j+q)}-\mu_j, \ \text{for all} \ j=1,\dots, p, 	
\end{equation*}
which implies
\begin{equation}\label{s:subsets:S}
	 \mu_{T(j+q)} = \mu_j,  \ \text{for all} \ j=1,\dots,p.
\end{equation}

Define a mapping  $T_q: \set{1,\dots,p} \to \set{1,\dots,p}$ by 
$$
T_q(j):= T(j+q) =  (j+q-1 \bmod p)+1.
$$
We get that $\mu_{\cdot}$ is invariant under the mapping $T_q$, i.e. $\mu_{T_q(j)} = \mu_j$ for all $j=1,\dots,p$.

Define $T_q^k(j) = T_q(T_q^{k-1}(j))$ for any integer $k\ge 2$. 
Consider the orbit of any index $j_0$ under repeated application of $T_q$, 
$$
j_0, \ T_q(j_0), \ T_q^2(j_0), \ \dots, \  T_q^k(j_0),\  \dots.  
$$

As the number of index is limited, the orbit must return to a certain value $i_0$ after a finite number of iterations. That is, there exists a positive integer $k$ such that 
$$
T_q^k(i_0) = i_0.
$$
This is equivalent to 
$$
i_0+kq \equiv i_0 \ (\bmod\  p)  \quad  \Longleftrightarrow  \quad  kq \equiv 0 \ (\bmod \ p). 
$$

Since 
$gcd(p,q)=1$, the least positive   $k$ satisfying 
$kq\equiv 0 \ ( \bmod \ p)$ is $k=p$. 
Hence the orbit must contain all $p$  
 indices before returning to  $i_0$.  In other words, iterating $T_q$ produces a single cycle of length 
$p$. Recall that $\mu_{\cdot}$ is invariant under $T_q$, i.e.  the relation \eqref{s:subsets:S}. We thus conclude that
  $$
  \mu_1=\cdots = \mu_p = 0. 
  $$
\end{proof}

\FloatBarrier
\bibliographystyle{apalike} 
\bibliography{mybib}

\end{document}